\newtheorem{theorem}{Theorem}
\newtheorem{remark}{Remark}
\newtheorem{corollary}{Corollary}
\newcommand{\ergo}{\mathcal{E}}
\newcommand{\en}{\mathfrak{E}}
\newcommand{\dstate}{\hat{\rho}}
\newcommand{\ham}{\hat{H}}
\newcommand{\pos}{\hat{q}}
\newcommand{\mom}{\hat{p}}
\begin{document}
	
	\title{Work extraction processes from noisy quantum batteries: \\ the role of non local resources } 
 
	\author{Salvatore Tirone}
 \email{salvatore.tirone@sns.it}
	\affiliation{Scuola Normale Superiore, I-56126 Pisa, Italy}  
 
	\author{Raffaele Salvia}
\email{raffaele.salvia@sns.it}
	\affiliation{NEST, Scuola Normale Superiore and Istituto Nanoscienze-CNR, I-56126 Pisa, Italy}
    
	\author{Stefano Chessa}
 \email{schessa@illinois.edu}
	\affiliation{NEST, Scuola Normale Superiore and Istituto Nanoscienze-CNR, I-56126 Pisa, Italy}
	\affiliation{Electrical and Computer Engineering, University of Illinois Urbana-Champaign, Urbana, Illinois, 61801, USA}
    
	\author{Vittorio Giovannetti}
	\affiliation{NEST, Scuola Normale Superiore and Istituto Nanoscienze-CNR, I-56126 Pisa, Italy}
	
	\begin{abstract}
We demonstrate an asymmetry 
between the beneficial effects one can obtain using non-local operations and non-local states
to mitigate the detrimental effects of environmental noise in the work extraction from quantum battery models. 
Specifically, we show that using non-local 
recovery operations after the noise action can in general increase the amount of work one can recover from the battery even with separable (i.e. non entangled) input states. On the contrary, 
employing entangled input states with local recovery operations will not generally improve the battery performances. 
	\end{abstract}
	
	\maketitle

Quantum thermodynamics is a rapidly growing field that seeks to understand the behavior of small quantum systems at the nanoscale \cite{Goold2016}. Of particular interest is the use of quantum effects to improve the charging processes of batteries \cite{Alicki2013, Campaioli2018, Andolina2019, Farina2019, Rossini2019, rossini2019quantum, JuliFarr2020}, which could potentially lead to technological advancements in a variety of sectors. 
A crucial aspect of the problem is to assess the stability of these models when in contact with  environmental noise, as this represents a more realistic scenario and it may have a significant impact on the efficiency of the energy recovery/storage.
In noiseless regimes,  capacities for quantum battery models  have been proposed in~\cite{def_ergo, JuliFarr2020, battery_cap2023} while, from a resource theoretical point of view, the thermodynamic capacity (in the sense of simulability) of quantum channels has been defined \cite{Thermodyn_Cap2019}. Concerning energy manipulation in more realistic scenarios where environmental noise is acting on the system,
 a few results have been obtained in specialized settings, see e.g.~\cite{Carrega_2020, Bai_2020, Tabesh_2020, Ghosh_2021, Santos_2021, Zakavati_2021, Landi_2021, Morrone_2022, Sen_2023}, 
and various schemes have been proposed to stabilize quantum batteries in the presence of specific type of pertubations \cite{Liu2019, PhysRevE.100.032107, PhysRevApplied.14.024092, PhysRevA.102.060201, PhysRevE.101.062114, PhysRevResearch.2.013095, Goold2016, liu2021boosting, PhysRevE.103.042118, PhysRevE.105.054115}. In this work we tackle the problem using the \textit{work capacitance} functionals introduced in Ref.~\cite{quantumworkcapacitances}. These quantities gauge the efficiency of the work extraction process from quantum battery models formed by large collections of identical and independent noisy elements (quantum cells or q-cells in brief), targeting  optimal state preparation schemes  (encoding operations performed \textit{before} the action of noise) and optimal recovery transformations (decoding operations performed \textit{after} the noise). Formally  they are defined as the asymptotic limit of the ratio between the work extracted from the system and the initial energy stored in the quantum battery and, for a given quantum battery model, their  values strongly depend on the type of constraints one enforces on the transformations allowed on the system -- see Fig.~\ref{fig:scheme1}.
Our main finding is to provide evidence that, irrespective of the noise model, the mere use of non-local resources at the level of the encoding (i.e.  employing  quantum correlated states to store the initial  energy of the battery) does not improve the efficiency of the model. On the contrary we show that employing non local transformations at the recovering stage, will in general increase the work extraction performance of the battery. The key ingredient to attain such result is the derivation of a single-letter formula that allows us to simplify the evaluation of the local-ergotropy capacitances~\cite{quantumworkcapacitances} for arbitrary noise models.

\begin{figure}[h]
\centering
\includegraphics[width=0.48\textwidth]{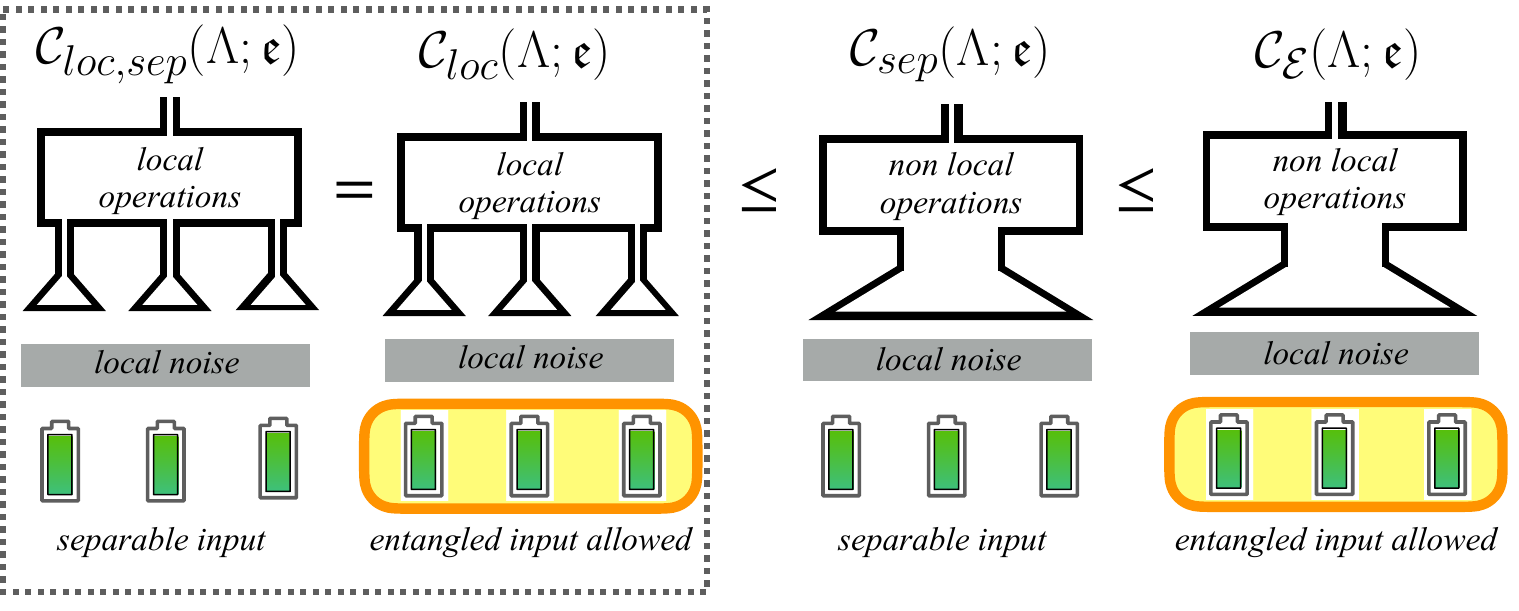}
\caption{Resource accounting for  work extraction in noisy QB models composed by $n$ q-cells (green elements of the figure) affected by local noise (grey). From left to right:
separable-input, local ergotropy capacitance
$C_{\rm loc, sep} \left( \Lambda; \mathfrak{e} \right)$ (maximum work extractable per unit cell,
when both the state preparation of the QB and the recovery operations applied after the noise action, are restricted to local resources);  local ergotropy capacitance
$C_{\rm loc} \left( \Lambda; \mathfrak{e} \right)$ (here locality is enforced only at the recovery level); separable-input  capacitance
$C_{\rm sep} \left( \Lambda; \mathfrak{e} \right)$ (locality is enforced only at the 
 state preparation level); ergotropy capacitance
$C_{\cal E} \left( \Lambda; \mathfrak{e} \right)$ (no local restrictions imposed). 
As shown in Theorem~\ref{th1}, irrespective of the noise model, $C_{\rm loc} \left( \Lambda; \mathfrak{e} \right)$ and  $C_{\rm loc, sep} \left( \Lambda; \mathfrak{e} \right)$ always coincide.}
\label{fig:scheme1}
\end{figure}

\paragraph{Preliminaries:}
We model noisy Quantum Batteries (QBs) as collections of  $n$ identical  and independent (non-interacting) elements (quantum cells or q-cells in brief), each capable to store energy in the internal degrees of freedom associated with their local Hamiltonians $\hat{h}_1$, $\hat{h}_2$, $\cdots$, $\hat{h}_n$, all of which are locally identical to a single q-cell Hamiltonian $\hat{h}$, and perturbed by the same local noise source which we describe in terms of a linear, completely positive and trace preserving (LCTP) super-operator $\Lambda$~\cite{STINE, CHOI1975285, kraus1983states}.  
The work extraction efficiency in these systems can be measured using as figure of merit the work-capacitances introduced in~\cite{quantumworkcapacitances}.
A first example  is provided by the ergotropic capacitance $C_{\ergo}\left( \Lambda; \mathfrak{e} \right)$.  Ergotropy is a well established measure of the maximum work one can extract from a quantum  state $\dstate$ by means of unitary operations that preserve the total energy of the system~\cite{def_ergo, Niedenzu2019, obejko2021}.
For a $d$-dimensional system characterized by a Hamiltonian $\ham$,
it can be expressed as  
\begin{eqnarray}\label{ergoeeeDEFinvariance} 
\ergo(\dstate;\ham)  := \max_{\hat{U} \in {\mathbb U}(d)}\Big\{  \en(\dstate;\ham) 
-\en(\hat{U} \dstate \hat{U}^\dag;\ham)\Big\}   \;,
\end{eqnarray}
where $\en(\dstate ; \ham) := \Tr[\dstate\ham]$ is the average energy of a quantum state $\dstate$ and 
 $\mathbb{U}(d)$ is the $d$-dimensional representation of the unitary group.
 In view of this definition a reasonable way to gauge the maximum work one  can retrieve form the QB after the action of the noise $\Lambda$ is 
 obtained by considering \begin{eqnarray} 
 \ergo^{(n)}(\Lambda;E)&:=& \max_{\dstate^{(n)} \in \mathfrak{S}^{(n)}_{E}}
 \ergo(\Lambda^{\otimes n}(\dstate^{(n)});\ham^{(n)}) \;,   \label{dergo}
 \end{eqnarray} where $\ham^{(n)}:=\hat{h}_1 + \cdots + \hat{h}_n$ represents the battery Hamiltonian, and where the maximization is performed on the set $\mathfrak{S}^{(n)}_{E}$, which is the set of all the $n$ q-cells states $\dstate^{(n)}$ with average energy $\en(\dstate^{(n)})\leq E$.
 The ergotropic capacitance $C_{\ergo}\left( \Lambda; \mathfrak{e} \right)$ is now defined 
as a proper regularization of  $\ergo^{(n)}(\Lambda;E)$ for $n\rightarrow \infty$,  under the assumption that 
(on average) each of the $n$ q-cells stores no more than a fraction $\mathfrak{e} \in [0,||\hat{h}||_{\infty}]$ of the total input energy~\cite{quantumworkcapacitances}, i.e. 
\begin{eqnarray} 
C_{\ergo}\left( \Lambda; \mathfrak{e} \right) := \lim_{n\rightarrow \infty} 
\frac{\ergo^{(n)}(\Lambda;E=n\mathfrak{e})}{n} \;, \label{cergo} 
\end{eqnarray} 
where without loss of generality we set to zero the ground state energy of  the local Hamiltonians $\hat{h}$. In absence of noise (i. e. $\Lambda$ is the identity channel) $C_{\ergo}\left( \Lambda; \mathfrak{e} \right)$ is equal to $\mathfrak{e}$, signalling that, by properly preparing the input state of the q-cells, we can retrieve all the energy we have initially stored into the battery. Dissipation and decoherence  will tend instead to produce smaller values of $C_{\ergo}\left( \Lambda, \mathfrak{e} \right)$, indicating that the performance of the model gets degraded irrespective of the choice we make at the level of state preparation of the QB. Setting restrictions on the allowed operations one can perform on the battery will also reduce the value of $C_{\ergo}\left( \Lambda, \mathfrak{e} \right)$. For instance, assuming the maximization in Eq.~(\ref{dergo}) to run only on separable input states of the q-cells will lead us to replace $\ergo^{(n)} (\Lambda;n\mathfrak{e})$ with  $\ergo^{(n)}_{\rm sep} (\Lambda;n\mathfrak{e})(\leq \ergo^{(n)} (\Lambda;n\mathfrak{e}))$, which regularized as in~(\ref{cergo}) gives the separable-input  capacitance
$C_{\rm sep} \left( \Lambda, \mathfrak{e} \right)$ of the model. This, in turn, expresses the asymptotic work we can extract per q-cell in the absence of initial entanglement between these elements~\cite{quantumworkcapacitances}. Similarly, by restricting the optimization in Eq.~(\ref{ergoeeeDEFinvariance}) to include only unitary operations acting locally on the q-cells (i.e. replacing $\ergo(\dstate;\ham)$ with the local ergotropy \cite{salviadepalma}), will lead us to identify $\ergo^{(n)}_{\rm loc} (\Lambda;n\mathfrak{e})$ with its regularized limit 
$C_{\rm loc} \left( \Lambda; \mathfrak{e} \right)$. Finally, assuming the optimizations  to be restricted to separable states and to local unitary operations, one can define $\ergo^{(n)}_{\rm loc, sep} (\Lambda;n\mathfrak{e})$ and  $C_{\rm loc, sep} \left( \Lambda; \mathfrak{e} \right)$, See \cite{SM} for the formal definitions. Simple resource counting arguments can be used to show that a natural partial ordering exists among these functionals~\cite{quantumworkcapacitances} which identifies $C_{\cal E} \left( \Lambda; \mathfrak{e} \right)$ and $C_{\rm loc, sep} \left( \Lambda; \mathfrak{e} \right)$ as the largest and smallest terms respectively, leaving to $C_{\rm sep} \left( \Lambda; \mathfrak{e} \right)$ and $C_{\rm loc} \left( \Lambda; \mathfrak{e} \right)$ the role of intermediate quantities,  i.e. 
 \begin{eqnarray}\label{natural}  C_{\cal E} \left( \Lambda; \mathfrak{e} \right)\geq C_{\rm sep} \left( \Lambda; \mathfrak{e} \right),
C_{\rm loc} \left( \Lambda; \mathfrak{e} \right)\geq  C_{\rm loc, sep} \left( \Lambda; \mathfrak{e} \right)\;. 
  \end{eqnarray} 
As we shall see, one of the main results of the present work is to refine~(\ref{natural}) showing that, for all noise models, no gap exists between $C_{\rm loc, sep} \left( \Lambda; \mathfrak{e} \right)$ and $C_{\rm loc} \left( \Lambda; \mathfrak{e} \right)$, and that provably $C_{\rm sep} \left( \Lambda; \mathfrak{e} \right) \geq C_{\rm loc} \left( \Lambda; \mathfrak{e} \right)$.

\paragraph{Closed formulas and bounds:--} 
We now show that, regardless of the LCPT map $\Lambda$, the separable-input, local ergotropy capacitance
$C_{\rm loc, sep} \left( \Lambda; \mathfrak{e} \right)$ and 
the local ergotropy capacitance
$C_{\rm loc} \left( \Lambda, \mathfrak{e} \right)$ coincide and admit a simple single-letter expression in terms of the single-shot ($n=1$)
maximal output ergotropy functional~(\ref{dergo}), i.e. 

\begin{theorem} \label{th1} For any LCPT map $\Lambda$, and for any $\mathfrak{e}\in [0,||\hat{h}||_{\infty}]$ we can write 
\begin{eqnarray} \label{exact} 
C_{\rm loc} \left( \Lambda; \mathfrak{e} \right)&=&
C_{\rm loc, sep} \left( \Lambda; \mathfrak{e} \right)= \chi(\Lambda; \mathfrak{e}) \;, \\ 
\chi(\Lambda; \mathfrak{e})&:=&  \sup_{\{ p_j, \mathfrak{e}_j\}} \sum_j p_j\;  \ergo^{(1)}(\Lambda;\mathfrak{e}_j)\;, 
\end{eqnarray} 
where the supremum is taken over all the distributions $\{ p_j, \mathfrak{e}_j \}$ of input q-cell energy $\mathfrak{e}_j\in [0,||\hat{h}||_{\infty}]$ that fulfil the
constraint
\begin{eqnarray}\label{energyconstraint} 
\sum_j p_j  \mathfrak{e}_j\leq	 \mathfrak{e} \;. 
\end{eqnarray} 
\end{theorem}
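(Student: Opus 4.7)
The plan is to prove the theorem via two inequalities, $C_{\rm loc}(\Lambda;\mathfrak{e}) \leq \chi(\Lambda;\mathfrak{e})$ and $C_{\rm loc, sep}(\Lambda;\mathfrak{e}) \geq \chi(\Lambda;\mathfrak{e})$; combined with the natural ordering $C_{\rm loc} \geq C_{\rm loc, sep}$ of Eq.~(\ref{natural}), the two local capacitances must coincide and both equal $\chi$. The cornerstone identity driving both bounds is that the local ergotropy of any $n$-cell state $\altstate^{(n)}$ with respect to the additive Hamiltonian $\ham^{(n)}=\sum_k \hat{h}_k$ decomposes over single-cell marginals as $\ergo_{\rm loc}(\altstate^{(n)};\ham^{(n)})=\sum_k \ergo(\altstate_k;\hat{h})$, since local unitaries factorise and only the reduced states contribute to $\Tr[\cdot\,\ham^{(n)}]$.

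For the converse direction, I would start from an arbitrary input $\dstate^{(n)}$ with $\Tr[\dstate^{(n)}\ham^{(n)}]\leq n\mathfrak{e}$ and set $\mathfrak{e}_k:=\Tr[\dstate_k\,\hat{h}]$, so that $\sum_k \mathfrak{e}_k\leq n\mathfrak{e}$. Because the noise is in product form, the $k$-th marginal of $\Lambda^{\otimes n}(\dstate^{(n)})$ is exactly $\Lambda(\dstate_k)$, and the decomposition above combined with the single-shot definition~(\ref{dergo}) gives
\begin{equation*}
\ergo_{\rm loc}\!\left(\Lambda^{\otimes n}(\dstate^{(n)});\ham^{(n)}\right) \leq \sum_k \ergo^{(1)}(\Lambda;\mathfrak{e}_k).
\end{equation*}
Dividing by $n$, the right-hand side is the expectation $\sum_k (1/n)\,\ergo^{(1)}(\Lambda;\mathfrak{e}_k)$ along the feasible distribution $\{(1/n,\mathfrak{e}_k)\}_k$, which satisfies $\sum_k (1/n)\mathfrak{e}_k \leq \mathfrak{e}$ and is therefore dominated by $\chi(\Lambda;\mathfrak{e})$ by definition. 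Taking the regularization limit~(\ref{cergo}) yields $C_{\rm loc}(\Lambda;\mathfrak{e})\leq\chi(\Lambda;\mathfrak{e})$.

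For achievability I would fix a feasible distribution $\{(p_j,\mathfrak{e}_j)\}_{j=1}^m$ and, for each $j$, pick a near-optimal single-cell input $\optstate_j$ whose output ergotropy $\ergo(\Lambda(\optstate_j);\hat{h})$ approximates $\ergo^{(1)}(\Lambda;\mathfrak{e}_j)$ to arbitrary precision. Choosing integers $n_j$ with $\sum_j n_j=n$ and $n_j/n\to p_j$, the product ansatz $\dstate^{(n)}=\bigotimes_j \optstate_j^{\otimes n_j}$ is manifestly separable and, by the same additivity identity, produces local ergotropy $\sum_j n_j\,\ergo(\Lambda(\optstate_j);\hat{h})$ after the noise. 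Dividing by $n$, taking $n\to\infty$, sending the approximation error to zero, and then taking the supremum over $\{(p_j,\mathfrak{e}_j)\}$ delivers $C_{\rm loc, sep}(\Lambda;\mathfrak{e})\geq\chi(\Lambda;\mathfrak{e})$.

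The main obstacle is the bookkeeping around the energy constraint in the achievability step: when~(\ref{energyconstraint}) is saturated, the integer rounding $n_j\approx np_j$ may make the actual total input energy $\sum_j n_j\mathfrak{e}_j$ overshoot $n\mathfrak{e}$ by an $O(1)$ amount, violating the feasibility of $\dstate^{(n)}$. I would absorb this by a standard continuity argument: replace $\mathfrak{e}_j$ with $(1-\delta)\mathfrak{e}_j$, exploit monotonicity of $\ergo^{(1)}(\Lambda;\cdot)$ together with the boundedness $\|\hat{h}\|_\infty<\infty$ to control the correction, take the regularization limit $n\to\infty$ first, and then send $\delta\to 0^+$, so that the overshoot contributes a vanishing $O(1/n)$ correction to the per-cell ratio.
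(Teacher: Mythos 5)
Your proof is correct and follows essentially the same route as the paper: both directions rest on the additivity of local ergotropy over single-cell marginals for the non-interacting Hamiltonian, with the upper bound on $C_{\rm loc}$ obtained from the marginals of the optimal input (yielding the feasible uniform distribution $\{1/n,\mathfrak{e}_k\}$) and the lower bound on $C_{\rm loc,sep}$ from a product ansatz realizing an arbitrary feasible distribution $\{p_j,\mathfrak{e}_j\}$. Your explicit handling of the integer-rounding/energy-overshoot issue in the achievability step is a welcome refinement of a point the paper passes over with the phrase ``we can force the r.h.s.\ to converge.''
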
 
\begin{proof} 
In view of (\ref{natural}), to derive Eq.~(\ref{exact}) it is sufficient to show that $\chi(\Lambda; \mathfrak{e})$ 
 is {\it i)} a lower bound
for $C_{\rm loc, sep} \left( \Lambda; \mathfrak{e} \right)$, and {\it ii)} an upper bound for 
$C_{\rm loc} \left( \Lambda; \mathfrak{e} \right)$.

A proof  of the  inequality {\it i)} follows by observing that 
 given $\mathfrak{e}\in [0,||\hat{h}||_{\infty}]$ and $n$ integer, we have that the maximum output local ergotropy of the model 
 $\ergo_{\rm loc, sep}^{(n)}(\Lambda;E=n\mathfrak{e})$ is certainly smaller than the output local ergotropy  computed on the (factorized) state
 of the form 
 $\hat{\rho}_{\rm fact}^{(n)} := \hat{\rho}_{1} \otimes \hat{\rho}_{2}\otimes \cdots\otimes \hat{\rho}_{n}$, 
 where for $i=1,\cdots,n$, $\hat{\rho}_{i}$  is a density matrix of the $i$-th q-cell with input energy $\tilde{\mathfrak{e}}_i\in [0,||\hat{h}||_{\infty}]$ fulfilling the constraint
 $\sum_{i=1}^n \tilde{\mathfrak{e}}_i /n\leq  \mathfrak{e}$.  
 That is 
 $\ergo_{\rm loc, sep}^{(n)}(\Lambda;E=n\mathfrak{e}) \geq \ergo_{\rm loc} (\Lambda^{\otimes n}( \hat{\rho}_{\rm fact}^{(n)});\ham^{(n)}) 
 =\sum_{i=1}^n  \ergo(\Lambda( \hat{\rho}_i);\hat{h})$,
 where in the second passage we used the fact that the local ergotropy for non-interacting systems is additive
 ~\cite{quantumworkcapacitances,salviadepalma}.
 In particular, selecting the $\tilde{\rho}_{i}$ so that they maximize  the single-shot 
 maximal output ergotropy $\ergo^{(1)}(\Lambda; \tilde{\mathfrak{e}}_i)$, we can translate the above inequality into
\begin{eqnarray} \label{questo1} 
 \frac{\ergo_{\rm loc, sep}^{(n)}(\Lambda;E=n\mathfrak{e})}{n} &\geq& 
\sum_{i=1}^n \frac{\ergo^{(1)}(\Lambda; \tilde{\mathfrak{e}}_i)}{n}  \;. 
 \end{eqnarray} 
 In the $n\rightarrow \infty$ limit the l.h.s. converges toward $C_{\rm loc, sep} \left( \Lambda, \mathfrak{e} \right)$. 
 On the contrary, given an arbitrary distribution $\{ p_j, \mathfrak{e}_j\}$ that fulfils the constraint (\ref{energyconstraint}), we can
 force the r.h.s. of (\ref{questo1}) to converge to $\sum_j p_j\;  \ergo^{(1)}(\Lambda;\mathfrak{e}_j)$. 
  Accordingly we can write
$C_{\rm loc, sep} \left( \Lambda; \mathfrak{e} \right)\geq   \sum_j p_j\;  \ergo^{(1)}(\Lambda;\mathfrak{e}_j)$, 
 which upon optimization over all choices of $\{ p_j, \mathfrak{e}_j\}$ shows that
 indeed   the r.h.s. of Eq.~(\ref{exact}) is a lower bound for
 $C_{\rm loc, sep} \left( \Lambda; \mathfrak{e} \right)$.
 
\noindent We now prove property {\it ii)}. For this purpose 
 consider the optimal  state $\dstate_*^{(n)}$ which allows us to
 saturate the maximization of $\ergo_{\rm loc}^{(n)}(\Lambda;E=n\mathfrak{e})$ for fixed $n$ and $\mathfrak{e}$.
 Using the additivity of  the local ergotropy we can write 
 \begin{eqnarray} \label{ddfds} 
 \frac{\ergo_{\rm loc}^{(n)}(\Lambda;E=n\mathfrak{e})}{n} &=& \frac{\ergo_{\rm loc} (\Lambda^{\otimes n}( \hat{\rho}_*^{(n)});\ham^{(n)})}{n}  \\ &=& \sum_{i=1}^n \frac{\ergo (\Lambda( \hat{\rho}_i); \hat{h})}{n} \leq 
 \sum_{i=1}^n \frac{\ergo^{(1)}(\Lambda; {\mathfrak{e}}_i)}{n}\;, 
 \nonumber 
 \end{eqnarray} 
where for $i\in\{ 1,\cdots,n\}$,  $\hat{\rho}_i$ is the reduced density matrix of $\hat{\rho}_*^{(n)}$ associated
with the $i$-th q-cell and where $\mathfrak{e}_i$ indicates its mean energy, which by construction must fulfil the 
condition 
 $\sum_{i=1}^n  {{\mathfrak{e}}_i}/{n} \leq  \mathfrak{e}$. 
 Observe next that since $\{ p_i=1/n,  \mathfrak{e}_i\}$ is a special 
 instance of energy distribution satisfying (\ref{energyconstraint}),
 the last term of~(\ref{ddfds}) is certainly smaller than or equal to  $\chi(\Lambda; \mathfrak{e})$. Taking the $n\rightarrow \infty$ limit we finally arrive to the thesis.
 \end{proof}

\begin{remark} 
For noise models where the single-shot ($n=1$)
maximal output ergotropy $\ergo^{(1)}(\Lambda;\mathfrak{e})$ is a concave function of the energy parameter $\mathfrak{e}$
the optimization in Eq.~(\ref{exact}) can be
explicitly performed leading to a more compact expression:
\begin{eqnarray} \label{exactconcave} 
C_{\rm loc} \left( \Lambda; \mathfrak{e} \right)=
C_{\rm loc, sep} \left( \Lambda; \mathfrak{e} \right)= \ergo^{(1)}(\Lambda;\mathfrak{e})\;.
\end{eqnarray} 
\end{remark} 
 
Since $C_{\rm loc, sep} \left( \Lambda; \mathfrak{e} \right)$ is certainly not larger than 
 $C_{\rm sep} \left( \Lambda; \mathfrak{e} \right)$, the result of Theorem~\ref{th1} allows us to introduce
 a definite ordering among $C_{\rm sep} \left( \Lambda; \mathfrak{e} \right)$ and $C_{\rm loc} \left( \Lambda; \mathfrak{e} \right)$, i.e. 
\begin{eqnarray} \label{ordering} 
C_{\rm sep} \left( \Lambda; \mathfrak{e} \right) \geq 
C_{\rm loc} \left( \Lambda; \mathfrak{e} \right)\;. \end{eqnarray} 
Physically this implies that, at variance with what happens in other quantum information settings like those of quantum communication~\cite{HOLEVOBOOK}, in the case of work extraction tasks the use of non local resources at the decoding stage is, in principle, {\it always} preferable than their use at the encoding stage. In order to strengthen this
statement we  next derive a non-trivial lower bound for $C_{\rm sep} \left( \Lambda; \mathfrak{e} \right)$ which for QB models made of
q-cells of dimension
larger than 2 is typically larger than $\chi(\Lambda, \mathfrak{e})$: 

\begin{corollary}\label{cor1}  For any LCPT map $\Lambda$, and for any $\mathfrak{e}\in [0,||\hat{h}||_{\infty}]$ we can write 
\begin{eqnarray} \label{lowerforcsep} 
C_{\rm sep} \left( \Lambda; \mathfrak{e} \right)\geq 
\chi_{\rm tot}(\Lambda; \mathfrak{e}) :=  \sup_{\{ p_j, \mathfrak{e}_j\}}
 \sum_j p_j\;  \ergo_{\rm tot}^{(1)}(\Lambda;\mathfrak{e}_j)\;, 
\end{eqnarray} 
where $\ergo_{\rm tot}^{(1)}(\Lambda;\mathfrak{e})$ is the single-shot, energy constrained, maximum total ergotropy~\cite{def_ergo} that one can get at the output of the channel $\Lambda$, where
as in the case of (\ref{exact}) the supremum is taken over all the distributions $\{ p_j, \mathfrak{e}_j \}$ of input q-cell energies $\mathfrak{e}_j\in [0,||\hat{h}||_{\infty}]$ satisfying the
constraint~(\ref{energyconstraint}). 
\end{corollary}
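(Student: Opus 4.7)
The plan is to mirror the lower-bound argument (part \textit{i}) used in the proof of Theorem~\ref{th1}, replacing the additivity of local ergotropy with the superadditivity of total ergotropy under tensor products. The conceptual point worth emphasizing is that, although total ergotropy fails to be additive, superadditivity is precisely the direction one needs here: discarding nonlocal unitaries at the decoding stage can only decrease the harvested work, so any factorized lower bound coming from local decoding automatically serves as a lower bound for the nonlocal-decoding quantity $C_{\rm sep} \left( \Lambda; \mathfrak{e} \right)$.

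Concretely, fix $\mathfrak{e}\in[0,||\hat{h}||_{\infty}]$ and $n$. I would restrict the maximization defining $\ergo_{\rm sep}^{(n)}(\Lambda;n\mathfrak{e})$ to factorized inputs $\hat{\rho}_{\rm fact}^{(n)}=\hat{\rho}_1\otimes\cdots\otimes\hat{\rho}_n$, with each $\hat{\rho}_i$ having mean energy $\tilde{\mathfrak{e}}_i:=\en(\hat{\rho}_i;\hat{h})$ obeying $\tfrac{1}{n}\sum_{i=1}^n\tilde{\mathfrak{e}}_i\leq\mathfrak{e}$. Since factorized states are separable, this yields $\ergo_{\rm sep}^{(n)}(\Lambda;n\mathfrak{e}) \geq \ergo(\Lambda^{\otimes n}(\hat{\rho}_{\rm fact}^{(n)}); \ham^{(n)})$. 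Because $\mathbb{U}(d^n)$ contains the local unitary subgroup, the total ergotropy on the right is no smaller than $\sum_{i=1}^n \ergo(\Lambda(\hat{\rho}_i);\hat{h})$, which is the superadditivity step. Choosing each $\hat{\rho}_i$ so as to saturate the single-shot maximal output total ergotropy at energy $\tilde{\mathfrak{e}}_i$, the bound becomes $\ergo_{\rm sep}^{(n)}(\Lambda;n\mathfrak{e})/n\geq \tfrac{1}{n}\sum_{i=1}^n \ergo_{\rm tot}^{(1)}(\Lambda;\tilde{\mathfrak{e}}_i)$.

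The remaining step is borrowed directly from Theorem~\ref{th1}: for any distribution $\{p_j,\mathfrak{e}_j\}$ fulfilling~(\ref{energyconstraint}), one selects the sequence $\{\tilde{\mathfrak{e}}_i\}_{i=1}^n$ as an empirical realization of $\{p_j,\mathfrak{e}_j\}$, so that the r.h.s.\ converges to $\sum_j p_j\,\ergo_{\rm tot}^{(1)}(\Lambda;\mathfrak{e}_j)$ as $n\to\infty$ while the l.h.s.\ converges to $C_{\rm sep} \left( \Lambda; \mathfrak{e} \right)$; optimizing over admissible $\{p_j,\mathfrak{e}_j\}$ delivers the claim. The main potential obstacle is this empirical-limit step, which requires mild regularity (typically continuity) of $\ergo_{\rm tot}^{(1)}(\Lambda;\cdot)$ on the compact interval $[0,||\hat{h}||_{\infty}]$. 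In the finite-dimensional setting assumed throughout this is supplied by standard continuity bounds on energy-constrained ergotropy, and is essentially the same technical ingredient implicitly used in the proof of Theorem~\ref{th1}.
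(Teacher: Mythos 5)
There is a genuine gap at the central step of your argument. From the fully factorized input $\hat{\rho}_1\otimes\cdots\otimes\hat{\rho}_n$, restricting the decoding to local unitaries gives
$\ergo(\Lambda^{\otimes n}(\hat{\rho}_{\rm fact}^{(n)});\ham^{(n)})\geq \sum_{i=1}^n\ergo(\Lambda(\hat{\rho}_i);\hat{h})$,
i.e.\ a sum of \emph{single-copy} ergotropies. No choice of the $\hat{\rho}_i$ upgrades this to $\sum_{i=1}^n \ergo_{\rm tot}^{(1)}(\Lambda;\tilde{\mathfrak{e}}_i)$: even if $\hat{\rho}_i$ maximizes the output \emph{total} ergotropy at energy $\tilde{\mathfrak{e}}_i$, what your bound controls is $\ergo(\Lambda(\hat{\rho}_i);\hat{h})\leq\ergo_{\rm tot}(\Lambda(\hat{\rho}_i);\hat{h})=\ergo_{\rm tot}^{(1)}(\Lambda;\tilde{\mathfrak{e}}_i)$, so the inequality you invoke points the wrong way. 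Your construction therefore only delivers $C_{\rm sep}(\Lambda;\mathfrak{e})\geq\chi(\Lambda;\mathfrak{e})$, which already follows from $C_{\rm sep}\geq C_{\rm loc,sep}=\chi$ in Theorem~\ref{th1}, and is strictly weaker than the claim whenever $\ergo_{\rm tot}^{(1)}>\ergo^{(1)}$ --- precisely the $d\geq 3$ regime the corollary is meant to address.

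The missing idea is that total ergotropy is only activated on many identical copies of the same state. The paper's proof takes $n=mk$ and the separable input $\dstate_1^{\otimes m}\otimes\cdots\otimes\dstate_k^{\otimes m}$, restricting the unitaries to act locally on the $k$ blocks of $m$ identical cells rather than on individual cells. Each block contributes $\ergo\left(\left(\Lambda(\dstate_i)\right)^{\otimes m};\ham^{(m)}\right)$, and it is the $m\to\infty$ limit of this quantity divided by $m$ --- by the very definition of $\ergo_{\rm tot}$ --- that produces $\ergo_{\rm tot}(\Lambda(\dstate_i);\hat{h})$ and hence, after optimizing each $\dstate_i$, the terms $\ergo_{\rm tot}^{(1)}(\Lambda;\mathfrak{e}_i)$ with rational weights $p_i=1/k$; these approximate an arbitrary admissible distribution and yield $\chi_{\rm tot}$. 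Your concluding ``empirical realization'' step is fine in spirit, but it must be applied to block frequencies, not to single-cell energies.
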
 
Ultimately the inequality~(\ref{lowerforcsep}) is a consequence of the property shown in Ref.~\cite{quantumworkcapacitances} 
 that 
$C_{\rm sep} \left( \Lambda; \mathfrak{e} \right)$ coincides with $C_{\rm sep,tot} \left( \Lambda; \mathfrak{e} \right)$
(the latter being obtained by replacing the ergotropy appearing in $\ergo_{\rm sep}^{(n)}(\Lambda;E)$ with the total ergotropy): for the sake of completeness however  we provide an independet proof of the corollary in \cite{SM}. 

Under special circumstances we can show that the r.h.s. of Eq.~(\ref{lowerforcsep}) provides  the exact value of $C_{\rm sep} \left( \Lambda; \mathfrak{e} \right)$:
\begin{corollary}\label{cor2}  Suppose that $\Lambda$ is a LCPT map such that,  for all  $\mathfrak{e} \in [0,||\hat{h}||_{\infty}]$  and $n$ integer, it
admits a single-site  state $\hat{\sigma}_{\mathfrak{e}}$, possibly dependent on $n$, with mean energy $\en(\hat{\sigma};\hat{h}) \leq \mathfrak{e}$, 
such that 
\begin{eqnarray}\label{step1} 
\ergo(\Lambda^{\otimes n}(|\Psi_{\rm fact}^{(n)}\rangle\!\langle \Psi_{\rm fact}^{(n)}|);\ham^{(n)})\leq
\ergo((\Lambda(\hat{\sigma}_{\mathfrak{e}})) ^{\otimes n};\ham^{(n)}), 
\end{eqnarray} 
for all factorized pure states $|\Psi_{\rm fact}^{(n)}\rangle$ with mean energy $\en(|\Psi_{\rm fact}^{(n)}\rangle;\ham^{(n)}) \leq n\mathfrak{e}$. Then the inequality~(\ref{lowerforcsep}) is saturated, i.e. \begin{eqnarray} \label{lowerforcsepide} 
C_{\rm sep} \left( \Lambda; \mathfrak{e} \right)=  \chi_{\rm tot}(\Lambda, \mathfrak{e})\;.
\end{eqnarray} 
\end{corollary}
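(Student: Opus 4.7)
The plan is to use Corollary~\ref{cor1}, which already gives $C_{\rm sep}(\Lambda;\mathfrak{e}) \geq \chi_{\rm tot}(\Lambda;\mathfrak{e})$, and to establish the reverse inequality using the hypothesis~(\ref{step1}). First, I would pick for each $n$ a (near-)optimal separable input state $\hat{\rho}_*^{(n)}$ attaining $\ergo_{\rm sep}^{(n)}(\Lambda;n\mathfrak{e})$, and decompose it as a convex mixture of pure product states, $\hat{\rho}_*^{(n)} = \sum_k q_k |\Psi_k^{(n)}\rangle\!\langle\Psi_k^{(n)}|$. Writing $\mathfrak{e}_k := \en(|\Psi_k^{(n)}\rangle;\ham^{(n)})/n$ for the per-cell energy of the $k$-th component, the global energy budget on $\hat{\rho}_*^{(n)}$ forces the distribution $\{q_k,\mathfrak{e}_k\}$ to obey the constraint~(\ref{energyconstraint}).

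The first analytic ingredient is the convexity of the ergotropy in the state, which follows from $\ergo(\hat{\rho};\ham) = \en(\hat{\rho};\ham) - \min_{\hat{U}}\Tr[\hat{U}\hat{\rho}\hat{U}^\dagger \ham]$ (linear minus concave, since the minimum of a family of linear functions is concave). Applying convexity to the mixture and then invoking the hypothesis~(\ref{step1}) for each factorized $|\Psi_k^{(n)}\rangle$ at energy level $\mathfrak{e}_k$, I obtain
\begin{align*}
\ergo(\Lambda^{\otimes n}(\hat{\rho}_*^{(n)});\ham^{(n)}) &\leq \sum_k q_k\, \ergo(\Lambda^{\otimes n}(|\Psi_k^{(n)}\rangle\!\langle\Psi_k^{(n)}|);\ham^{(n)}) \\
&\leq \sum_k q_k\, \ergo((\Lambda(\hat{\sigma}_{\mathfrak{e}_k}))^{\otimes n};\ham^{(n)}).
\end{align*}
The second ingredient is the standard fact that the sequence $\ergo(\hat{\rho}^{\otimes n};\ham^{(n)})/n$ is non-decreasing in $n$ and converges to the total ergotropy (Fekete's lemma applied to the super-additivity of $\ergo$ on tensor products). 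Thus each summand on the right is upper-bounded by $n\cdot \ergo_{\rm tot}^{(1)}(\Lambda;\mathfrak{e}_k)$, since $\hat{\sigma}_{\mathfrak{e}_k}$ has mean energy at most $\mathfrak{e}_k$ and $\ergo_{\rm tot}^{(1)}(\Lambda;\cdot)$ is an upper bound on the total output ergotropy at that energy level. Dividing by $n$ and recognising that $\{q_k,\mathfrak{e}_k\}$ is admissible for the definition of $\chi_{\rm tot}$, one arrives at $\ergo_{\rm sep}^{(n)}(\Lambda;n\mathfrak{e})/n \leq \chi_{\rm tot}(\Lambda;\mathfrak{e})$ uniformly in $n$, and the desired inequality follows by letting $n\to\infty$.

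The main delicate point is conceptual rather than computational: the decomposition $\{q_k,\mathfrak{e}_k\}$ and the single-site witnesses $\hat{\sigma}_{\mathfrak{e}_k}$ furnished by the hypothesis all depend on $n$, so one has to verify that no compactness or continuity argument is needed to pass to the limit. Fortunately this is automatic, because the bound $\sum_k q_k \ergo_{\rm tot}^{(1)}(\Lambda;\mathfrak{e}_k) \leq \chi_{\rm tot}(\Lambda;\mathfrak{e})$ holds at every finite $n$ as a direct consequence of the definition of $\chi_{\rm tot}$ and of the energy constraint on the decomposition, and the monotone bound on $\ergo(\hat{\rho}^{\otimes n};\ham^{(n)})/n$ is independent of the particular $\hat{\sigma}_{\mathfrak{e}_k}$ chosen.
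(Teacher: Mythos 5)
Your proposal is correct and follows essentially the same route as the paper's own proof: decompose the separable input into pure product states, use convexity of the ergotropy, apply the hypothesis~(\ref{step1}) term by term, pass from $\ergo((\Lambda(\hat{\sigma}_{\mathfrak{e}_k}))^{\otimes n};\ham^{(n)})/n$ to $\ergo_{\rm tot}(\Lambda(\hat{\sigma}_{\mathfrak{e}_k});\hat{h})\leq \ergo^{(1)}_{\rm tot}(\Lambda;\mathfrak{e}_k)$ via the supremum characterization of the total ergotropy, and conclude with the admissibility of $\{q_k,\mathfrak{e}_k\}$ for $\chi_{\rm tot}$. Your additional remarks (explicit justification of convexity, use of near-optimal states, and the observation that the $n$-dependence of $\hat{\sigma}_{\mathfrak{e}}$ is harmless because the bound is uniform in $n$) are sound refinements of the same argument.
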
 
\begin{proof}
Let be $\{ P_k, |\Psi_{\rm fact}^{(n)}(k)\rangle \}$ an ensemble of factorized states allowing us to express a given 
 separable density matrix  $\dstate_{\rm sep}^{(n)}$ of the QB, i.e. 
$\dstate_{\rm sep}^{(n)} = \sum_k P_k |\Psi_{\rm fact}^{(n)}(k)\rangle\!\langle \Psi_{\rm fact}^{(n)}(k)|$.
 Notice that if $\dstate_{\rm sep}^{(n)}$ has mean energy $\en(\dstate^{(n)}_{\rm sep};\ham^{(n)}) \leq n\mathfrak{e}$, then
 we must have 
$\sum_k P_k \mathfrak{e}_k \leq \mathfrak{e}$,
with $n\mathfrak{e}_k$ being the mean energy of $|\Psi_{\rm fact}^{(n)}(k)\rangle$, i.e.
 $\en(|\Psi_{\rm fact}^{(n)}(k)\rangle;\ham^{(n)}) = n\mathfrak{e}_k$.
Thanks to the convexity of the ergotropy functional, it follows
\begin{eqnarray}\label{step3} 
\frac{1}{n}\ergo(\Lambda^{\otimes n}(\dstate_{\rm sep}^{(n)});\ham^{(n)}) &\leq&\frac{1}{n} \sum_k P_k 
\ergo(\Lambda^{\otimes n}(|\Psi_{\rm fact}^{(n)}(k)\rangle);\ham^{(n)}) \nonumber \\
&\leq& \frac{1}{n} \sum_k P_k  \ergo((\Lambda(\hat{\sigma}_{\mathfrak{e}_k})) ^{\otimes n};\ham^{(n)})\nonumber \\
&\leq& \sum_k P_k  \sup_{n'}\tfrac{ \ergo((\Lambda(\hat{\sigma}_{\mathfrak{e}_k})) ^{\otimes n'};\ham^{(n')})}{n'} \nonumber \\
&=& \sum_k P_k  \ergo_{\rm tot}(\Lambda(\hat{\sigma}_{\mathfrak{e}_k});\hat{h} ) \nonumber \\
&\leq & \sum_k P_k  \ergo^{(1)}_{\rm tot}(\Lambda,{\mathfrak{e}_k})\leq \chi_{\rm tot} (\Lambda;\mathfrak{e})\;,  \nonumber 
\end{eqnarray} 
where in the second inequality we used~(\ref{step1}),  while in the last three passages we invoked  the   definitions of total ergotropy and of
maximum energy-constrained, output total ergotropy. 
Observing that $\{ P_k, {\mathfrak{e}_k}\}$ is a special instance of the ensembles entering the supremum of r.h.s. of Eq.~(\ref{lowerforcsepide}), we can hence conclude that 
$\frac{1}{n}\ergo(\Lambda^{\otimes n}(\dstate_{\rm sep}^{(n)});\ham^{(n)}) \leq  
\chi_{\rm tot} (\Lambda;\mathfrak{e})$,  
that holds true for all separable inputs that have mean energy smaller than or equal to $n\mathfrak{e}$.
The thesis finally follows by taking the supremum with respect all  $\dstate_{\rm sep}^{(n)}$ and taking the $n\rightarrow \infty$ limit. 
\end{proof} 
\paragraph{Examples:--} Since the total-ergotropy functional and the ergotropy always coincide for 
systems of dimension 2 (see \cite{SM}), in view of  Theorem~\ref{th1} and of Corollary~\ref{cor1} the best option to 
identify QBs that exhibit a finite gap $\Delta C$ between $C_{\rm sep} \left( \Lambda; \mathfrak{e} \right)$ and 
$C_{\rm loc} \left( \Lambda; \mathfrak{e} \right)$ is to focus on models with q-cell elements having dimension $d\geq 3$.
A first example of this kind can be found in \cite{quantumworkcapacitances} which solved the values of $C_{\rm sep} \left( \Lambda; \mathfrak{e} \right)$
and $C_{\rm loc} \left( \Lambda; \mathfrak{e} \right)$  for 
 depolarizing maps  of arbitrary dimension. In what follows we present a couple of extra cases. 
 In panel a) of Fig.~\ref{fig:MADworld} for instance we plot the gap we obtained solving numerically the optimizations in Eqs.~(\ref{exact}) and (\ref{lowerforcsep})
  for a three-level system subject to the action of a multilevel amplitude damping channels~\cite{MAD,ReMAD}.
\begin{figure}
\centering
\includegraphics[width=0.95\linewidth]{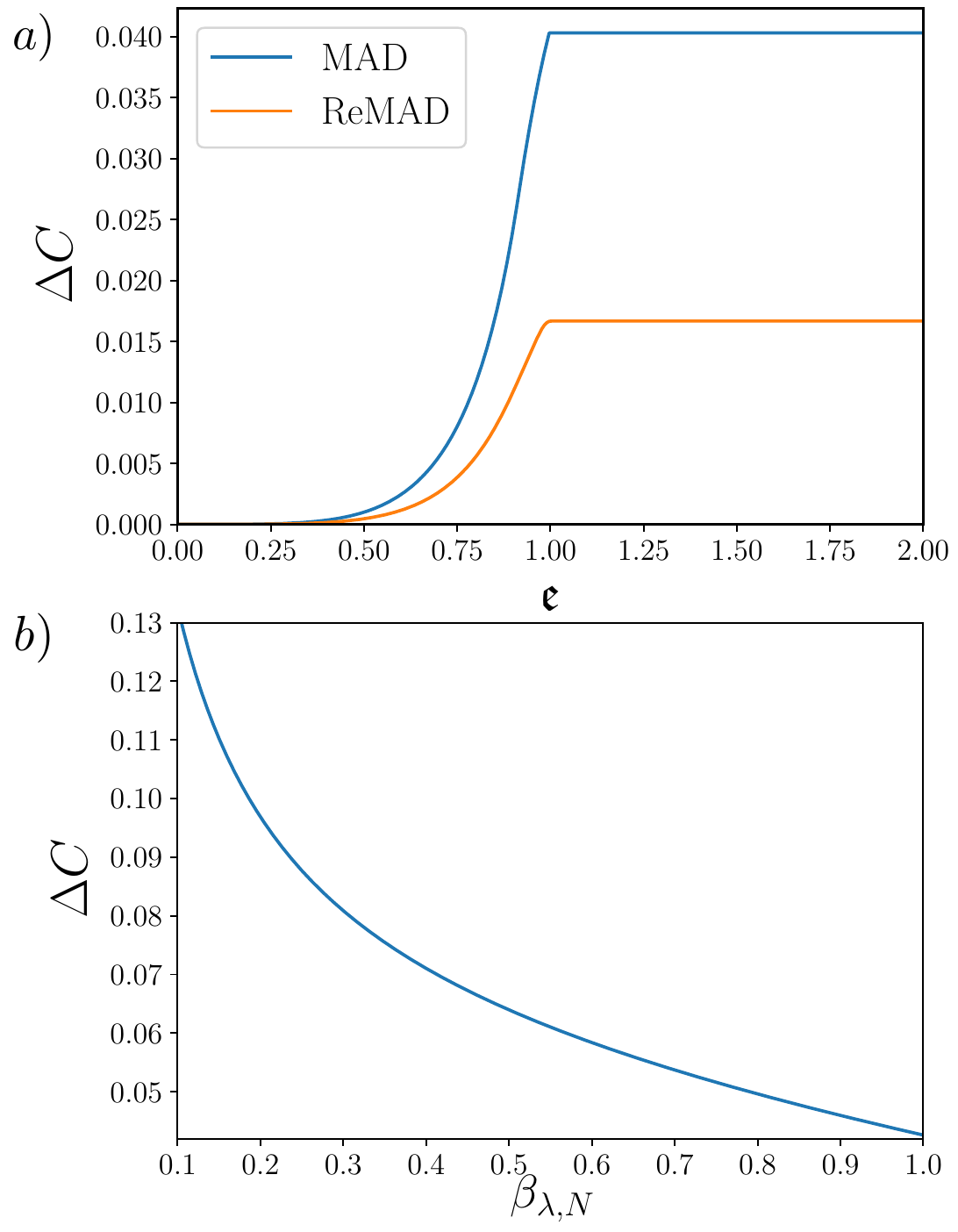}
\caption{Capacitances gap $\Delta C :=  C_{\rm sep}(\Lambda; \mathfrak{e}) - C_{\rm loc}(\Lambda; \mathfrak{e})$. Panel a) plot of the lower bound on $\Delta C$ derived from~(\ref{exact}) and (\ref{lowerforcsep})
 for  the MAD channel $\Phi_{\gamma_1, \gamma_2, \gamma_3}$~\cite{MAD}  and of the 
 ReMAD channel $\Gamma_{\gamma_1, \gamma_2, \gamma_3}$~\cite{ReMAD}, both acting on a qutrit system of Hamiltonian $\hat{h} = \varepsilon_0\left(\ket{1}\!\!\bra{1} + 2\ket{2}\!\!\bra{2}\right)$ 
Here the energy is rescaled by the factor $\varepsilon_0$ and the noise parameters have been fixed equal to $\gamma_1 = 0.3$, $\gamma_2 = 0.2$ and $\gamma_3 = 0.6$. 
  Panel b)  plot of $\Delta C$ 
   for the two-mode Gaussian attenuator $\Phi_2 = \mathcal{L}_{\lambda,N}\otimes\mathcal{L}_{0,0}$ as a function of the effective temperature $\beta_{\lambda, N}$, see Eq.~(\ref{output_coerenti}) in \cite{SM}; in this case $\Delta C$ is independent on the average input energy $\mathfrak{e}$. In the plot the value of $\Delta C$ has been rescaled by $\hbar \omega$, with $\omega$ being the modes frequency.}
\label{fig:MADworld}
\end{figure}
Another example  is obtained by focusing on QB models formed by collection of  independent (infinite dimensional) harmonic oscillators affected by 
 noise models described by Phase-Insensitive Bosonic Gaussian Channels (PI-BGCs)~\cite{HOLEVOBOOK,SERAFINIBOOK}. Thanks to the results of \cite{PhysRevLett.127.210601} we know that in these systems the maximal output ergotropy and maximal total-ergotropy are obtained
 using coherent states as input configurations for the q-cells.   In case $\Lambda$ corresponds to  a single-mode PI-BGC $\Phi_1$ this implies that 
$C_{\ergo}(\Phi_1;\mathfrak{e}) = C_{\rm sep}(\Phi_1;\mathfrak{e})$ for all input energies $\mathfrak{e}$, leading to no gap between
$C_{\rm sep} \left( \Phi_1; \mathfrak{e} \right)$ and $C_{\rm loc} \left( \Phi_1; \mathfrak{e} \right)$. In particular we have  $C_{\ergo}(\mathcal{L}_{\lambda,N};\mathfrak{e}) = \lambda \mathfrak{e}$, $C_{\ergo}(\mathcal{A}_{\mu,N};\mathfrak{e}) = \mu \mathfrak{e}$, and   $C_{\ergo}(\mathcal{N}_{N};\mathfrak{e}) = \mathfrak{e}$, where $\mathcal{L}_{\lambda,N}$ is the thermal attenuator, $\mathcal{A}_{\mu,N}$ is the thermal amplifier and $\mathcal{N}_N$ is the additive noise channel~\cite{HOLEVOBOOK,SERAFINIBOOK}.
 The situation becomes more interesting if $\Lambda$ represents a multi-mode PI-BGC.  As a matter of fact we can observe here
 cases where $C_{\rm sep}$ and $C_{\rm loc}$ exhibit a gap. An instance of this fact is presented in panel b) of Fig.~\ref{fig:MADworld} where we
  consider a 2 modes  PI-BGC  $\Phi_2 = \mathcal{L}_{\lambda,N}\otimes\mathcal{L}_{0,0}$, which acts as a thermal attenuator on one of the modes, and outputs the vacuum state on the other mode.
For this channel the optimal output ergotropy $\ergo^{(1)}\left( \Phi_2; \mathfrak{e} \right)$ and the optimal output total ergotropy $\ergo_{\rm tot}^{(1)}\left( \Phi_2; \mathfrak{e} \right)$ are both affine, but different functions of the input energy $\mathfrak{e}$, implying $C_{\rm loc}\left( \Phi_2; \mathfrak{e} \right) = \ergo^{(1)}\left( \Phi; \mathfrak{e} \right)$ and $C_{\rm sep}\left( \Phi_2; \mathfrak{e} \right) = \ergo^{(1)}_{\rm tot}\left( \Phi; \mathfrak{e} \right)$. As shown in \cite{SM} the resulting gap $\Delta C$ is a constant w.r.t. to the input energy and only depends on the noise parameters of the model (i.e. the constants $\lambda$ and $N$) via an implicit functional of $\lambda$ and~$N$. 
\paragraph{Discussion:--} 
We have observed an asymmetry in the role of non-local resources in the investigation of work extraction from noisy QB models. Specifically, our findings indicate that when energy is recovered through local operations, the use of entangled input states of the q-cells does not enhance the QB's resistance to noise. Conversely, through the examination of a few examples, we demonstrate that incorporating non-locality in the extraction operations can be advantageous. This implies that non-locality provides a distinct benefit in energy retrieval, but may not be beneficial if employed in state preparation. Moving forward, we plan to extend these results to models where the QB Hamiltonian exhibits interactions among the individual q-cells.
\\

We acknowledge financial support by MUR (Ministero dell’ Universit\`a e della Ricerca) through the following projects: 
PNRR MUR project PE0000023-NQSTI, PRIN 2017 Taming complexity via Quantum Strategies: a Hybrid Integrated Photonic approach (QUSHIP) Id. 2017SRN-BRK, and project PRO3 Quantum Pathfinder. S.C. is also supported by a grant through the IBM-Illinois Discovery Accelerator Institute.

\clearpage

\appendix

\onecolumngrid
\begin{center}
    {\bf SUPPLEMENTAL MATERIAL}
\bigskip
\bigskip
\end{center}

\twocolumngrid

\section{Ergotropy and total ergotropy} 
We recall that the ergotropy~(\ref{ergoeeeDEFinvariance}) of quantum state $\dstate$ can be expressed as the difference between its mean energy
and the mean energy of its passive counterpart
$\dstate_{\text{pass}}$, i.e. 
\begin{equation}\label{ergoeee} 
	\ergo(\dstate;\ham) =  \en(\dstate;\ham)-  \en(\dstate_{\text{pass}};\ham)  \;,
	\end{equation}
with $\dstate_{\text{pass}}$
obtained  by mapping the  eigenvectors $\{ |\lambda_\ell\rangle\}_{\ell}$ of $\dstate$ into the eingenvectors $\{ |E_\ell\rangle\}_{\ell}$ 
of the system Hamiltonian $\ham$, matching the corresponding eingenvalues in reverse order, i.e. 
 \begin{eqnarray} 
 \left.
 \begin{array}{l} 
 \dstate =\sum_{\ell =1}^{d} \lambda_\ell |\lambda_\ell\rangle \! \langle\lambda_\ell|  \\ \\
\ham=\sum_{\ell =1}^{d} E_\ell |E_\ell\rangle\!\langle E_\ell| 
 \end{array} \right\} \mapsto
 \dstate_{\text{pass}} := \sum_{\ell =1}^{d} \lambda_\ell |E_\ell\rangle\!\langle E_\ell| \;, \nonumber \\ \label{deftutto} 
 \end{eqnarray} 
 where for all $\ell=1,\cdots, d-1$,  we set  $\lambda_{\ell} \geq \lambda_{\ell+1}$ and 
 $E_\ell  \leq E_{\ell+1}$.

The
	total-ergotropy $\ergo_{\rm tot}(\dstate;\ham)$ 
 is a regularized version of  Eq.~(\ref{ergoeeeDEFinvariance})
 that emerges when considering  scenarios where 
one has at disposal an arbitrary large number of identical copies of the input state $\dstate$. 
Formally it is defined as 
\begin{eqnarray}\label{totergoeeeDEFtotal} 
\ergo_{\rm tot}(\dstate;\ham)  &:=& \lim_{N\rightarrow \infty} \frac{\ergo(\dstate^{\otimes N};\ham^{(N)})}{N}\;,
	\end{eqnarray}
	where for fixed $N$ integer, $\ham^{(N)}$ is the total Hamiltonian of the $N$ copies of the system  
obtained by assigning to each of them the same $\ham$ (no interactions being included). 
One can verify~\cite{Alicki2013,def_ergo} that  the limit in Eq.~(\ref{totergoeeeDEFtotal}) exists and 
 corresponds to the maximum of $\frac{\ergo(\dstate^{\otimes N};\ham^{(N)})}{N}$ with respect to all possible $N$, implying in particular that $\ergo_{\rm tot}(\dstate;\ham)$ is at least as large as $\ergo(\dstate;\ham)$,
 i.e.
 \begin{eqnarray} \label{qubitimpo1} 
 \ergo_{\rm tot}(\dstate;\ham) =\sup_{N\geq 1} \frac{\ergo(\dstate^{\otimes N};\ham^{(N)})}{N}  \geq \ergo(\dstate;\ham)\;.
 \end{eqnarray} 
The case where the system has dimension 2 represents an exception to this rule since under this condition one has
  \begin{eqnarray} \label{qubitsimpo} 
 \ergo_{\rm tot}(\dstate;\ham) =\lim_{N\to\infty}\frac{\ergo(\dstate^{\otimes N};\ham^{(N)})}{N}  =\ergo(\dstate;\ham)\;, 
 \end{eqnarray} 
 for all inputs and for all $N$. 
Notably $\ergo_{\rm tot}(\dstate;\ham)$ can be  expressed as the energy gap between the 
state $\dstate$ and its completely passive counterpart $ \dstate_{\text{c-pass}}$, i.e. 
\begin{equation} \label{GIBBS} 
\ergo_{\rm tot}(\dstate;\ham) =
 \en(\dstate;\ham)-  \en( \dstate_{\text{c-pass}}; \ham)\;,
\end{equation} 
with $\dstate_{\text{c-pass}}$ being the thermal Gibbs state 
\begin{eqnarray} \label{GIBBS1} 
\hat{\tau}_{\beta_{\star}}:= e^{ - \beta_{\star} \ham}/Z({\beta_{\star}})\;, \qquad  Z({\beta_{\star}}) : =\mbox{Tr}[ e^{ - \beta_{\star} \ham}]  \;, \end{eqnarray} 
with effective inverse temperature $\beta_\star$  selected to ensure that 
$\dstate_{\text{c-pass}}$ has the same von Neumann entropy of $\dstate$, i.e. 
\begin{eqnarray} S(\dstate_{\text{c-pass}})= S(\dstate) := -\mbox{Tr}[\dstate \ln \dstate]\;.\end{eqnarray}

 Clearly both  $\dstate_{\text{pass}}$  and $\dstate_{\text{c-pass}}$ are complicated  functions of $\dstate$:
 notice  however that they only depend upon the spectrum of such operator, meaning that input states differing by a unitary rotation will have the same passive and completely passive counterparts.

\section{Explicit definitions of the quantum work capacitances} \label{app:newdef}
Here we review the definitions of the ergotropic capcitances that can be found in \cite{quantumworkcapacitances}. We start by the definition of the local ergotropic capacitance $C_{\rm loc}(\Lambda;\mathfrak{e})$ for any quantum channel $\Lambda$ and any $\mathfrak{e}\in[0,||\hat{h}||_{\infty}]$. At first we define the $n$-uses maximal output local ergotropy:

\begin{eqnarray}
&& \ergo_{\rm loc}^{(n)}(\Lambda;E) :=  \max_{\dstate^{(n)}\in\mathfrak{S}_{E}^{(n)}}\bigg[\en(\Lambda^{\otimes n}(\dstate^{(n)}));\ham^{(n)}) - \nonumber \\
&& \min_{\hat{U}_1,..,\hat{U}_n\in\mathbb{U}(d)}\en(\hat{U}_1\otimes...\otimes\hat{U}_n\Lambda^{\otimes n}(\dstate^{(n)})\hat{U}_1^{\dagger}\otimes...\otimes\hat{U}_n^{\dagger};\ham^{(n)}) \bigg] \; , \nonumber \\
\end{eqnarray}
so we can now define the local ergotropic capacitance as
\begin{equation}
C_{\rm loc}(\Lambda,\mathfrak{e}) := \lim_{n\to\infty}\frac{\ergo_{\rm loc}^{(n)}(\Lambda,n\mathfrak{e})}{n} \; .
\end{equation}
The separable input $n$-uses maximal output ergotropy for any channel $\Lambda$ and any input energy $\mathfrak{e}$ is:
\begin{eqnarray}
\ergo_{\rm sep}^{(n)}(\Lambda;E) := \max_{\dstate_{\rm sep}\in\mathfrak{S}_{E,\rm sep}^{(n)}}\bigg[\en(\Lambda^{\otimes n}(\dstate^{(n)}_{\rm sep});\ham^{(n)}) - \nonumber \\ 
 \min_{\hat{U}\in\mathbb{U}(d^n)}(\en(\hat{U}\Lambda^{\otimes n}(\dstate^{(n)}_{\rm sep})\hat{U}^{\dagger});\ham^{(n)})\bigg] \; ,
\end{eqnarray}
here $\mathfrak{S}^{(n)}_{E,\rm sep}$ represents the set of all separable $n$ qudit states with input energy $\en(\dstate^{(n)})_{\rm sep} \leq E$. So we can now define the separable input ergotropic capacitance as
\begin{equation}
C_{\rm sep}(\Lambda;\mathfrak{e}) := \lim_{n\to\infty}\frac{\ergo_{\rm sep}^{(n)}(\Lambda;n\mathfrak{e})}{n} \; .
\end{equation}
Finally in order to define the $C_{\rm loc,sep}(\Lambda,\mathfrak{e})$ capacitance we write the local separable maximal output ergotropy as 
\begin{eqnarray}
&& \ergo_{\rm loc,sep}^{(n)}(\Lambda,E) := \max_{\dstate^{(n)}_{\rm sep}\in\mathfrak{S}^{(n)}_{E,\rm sep}}\bigg[\en(\Lambda^{\otimes n}(\dstate^{(n)}_{\rm sep});\ham^{(n)}) - \nonumber \\
&& \min_{\hat{U}_1,..,\hat{U}_n\in\mathbb{U}(d)}\en(\hat{U}_1\otimes...\otimes\hat{U}_n\Lambda^{\otimes n}(\dstate^{(n)}_{\rm sep})\hat{U}_1^{\dagger}\otimes...\otimes\hat{U}_n^{\dagger};\ham^{(n)}) \bigg] \; , \nonumber \\
&&
\end{eqnarray}
the the capacitance for any channel $\Lambda$ and input energy $\mathfrak{e}$ is 
\begin{equation}
C_{\rm loc,sep}(\Lambda;\mathfrak{e}) := \lim_{n\to\infty}\frac{\ergo^{(n)}_{\rm loc,sep}(\Lambda;n\mathfrak{e})}{n} \; .
\end{equation}
The existence of the limits of the functions above is guaranteed by Fekete's lemma as proven in \cite{quantumworkcapacitances}.

\section{Proof of Corollary~\ref{cor1} }\label{AppeA} 
Here we give an explicit proof of Corollary~\ref{cor1}, which shows how  to lower bound the 
separable-input capacitance $C_{\rm sep} \left( \Lambda; \mathfrak{e} \right)$ with the convex closure 
$\chi_{\rm tot}(\Lambda; \mathfrak{e})$ of the single-shot energy constrained maximum total ergotropy
$\ergo_{\rm tot}^{(1)}(\Lambda;\mathfrak{e})$~\cite{quantumworkcapacitances}, i.e. the quantity 
 \begin{eqnarray} 
 \ergo_{\rm tot}^{(1)}(\Lambda;\mathfrak{e})&:=& \max_{\dstate \in \mathfrak{S}_{\mathfrak{e}}}
 \ergo_{\rm tot}(\Lambda(\dstate);\hat{h}) \;,   \label{dergotot}
 \end{eqnarray}
 where the maximization is performed over the set $\mathfrak{S}_{\mathfrak{e}}$ of the single q-cell states
 with input energy not larger than~${\mathfrak{e}}$.

\begin{proof}
For large enough $n$ consider the separable density matrix 
$\dstate^{(n)} := \dstate_1^{\otimes m} \otimes 
\dstate_2^{\otimes m} \otimes \cdots \otimes \dstate_k^{\otimes m}$,  
where $m$ and $k$ are integers that give a partition of $n$, i.e.
$n= m k$,
and $\dstate_1$, $\dstate_2$, $\cdots$, $\dstate_k$ are single site density matrices with mean energies
$\mathfrak{e}_1$, $\cdots$, $\mathfrak{e}_k$, chosen in such a way that 
$\dstate^{(n)}$ has mean energy  smaller than or equal to $E= n \mathfrak{e}$, i.e.
 \begin{eqnarray} \label{cons11} 
 \sum_{i=1}^k m {\mathfrak{e}}_i \leq n \mathfrak{e}  \qquad \Longleftrightarrow \qquad 
  \sum_{i=1}^k \frac{1}{k} {\mathfrak{e}}_i \leq  \mathfrak{e} \;.
 \end{eqnarray} 
We can hence write
 \begin{eqnarray} \label{ecco} 
 \frac{1}{n} \ergo_{\rm sep}^{(n)}(\Lambda;E=n\mathfrak{e})&\geq& 
 \frac{1}{n} \ergo (\Lambda^{\otimes n}( {\dstate}^{(n)});\ham^{(n)}) \nonumber \\
 &\geq &\frac{1}{n} \sum_{i=1}^k  \ergo (\Lambda^{\otimes m}( {\dstate}_i^{\otimes m});\ham^{(m)}) 
 \nonumber \\
 &\geq &\frac{1}{m k} \sum_{i=1}^k  \ergo \left(\left(\Lambda( {\dstate}_i)\right)^{\otimes m};\ham^{(m)}\right) \nonumber\;,
 \end{eqnarray} 
 where the second inequality is obtained by restricting the optimization over the set of unitaries acting locally on the $k$ blocks of q-cells. Taking $m\rightarrow \infty$ the first term of~(\ref{ecco})
 gives $C_{\rm sep} \left( \Lambda; \mathfrak{e} \right)$. Vice versa, since for all $i\in\{ 1,\cdots, k\}$ we have 
\begin{eqnarray} \lim_{m \rightarrow \infty}  \frac{\ergo\left(\left(\Lambda( {\dstate}_i)\right)^{\otimes m};\ham^{(m)}\right)}{m}=
 \ergo_{\rm tot} (\Lambda( {\dstate}_i);\hat{h})\;, \end{eqnarray}  
 we get 
\begin{eqnarray} C_{\rm sep} \left( \Lambda; \mathfrak{e} \right)\geq  \sum_{i=1}^k \frac{1}{k}   \ergo_{\rm tot} (\Lambda( {\dstate}_i);\hat{h})\;.
\end{eqnarray}  
Next step is to apply the above inequality to a collection of states 
$\dstate_1$, $\dstate_2$, $\cdots$, $\dstate_k$ which saturate the single-shot maximal output total ergotropy 
of the model for the energies ${\mathfrak{e}}_1$, ${\mathfrak{e}}_2$, $\cdots$, ${\mathfrak{e}}_n$.
Accordingly, we can replace all the $\ergo_{\rm tot} (\Lambda( {\dstate}_i);\hat{h})$ with the associated 
$\ergo^{(1)}_{\rm tot} (\Lambda;{\mathfrak{e}}_i)$, leading to 
 \begin{eqnarray} \label{lowerforcsep11} 
C_{\rm sep} \left( \Lambda; \mathfrak{e} \right)\geq  \sum_{i=1}^k \frac{1}{k}   \ergo^{(1)}_{\rm tot} (\Lambda;{\mathfrak{e}}_i)\;.
\end{eqnarray} 
The thesis finally follows by observing that 
since we can choose both $k$ and the ${\mathfrak{e}}_i$ arbitrarily within the constraint~(\ref{cons11}), the r.h.s. of
(\ref{lowerforcsep11}) can reproduce all the averages $\sum_j p_j\;  \ergo_{\rm tot}^{(1)}(\Lambda;\mathfrak{e}_j)$ fulfilling (\ref{exact}) of the main text. 
\end{proof}

\section{Harmonic oscillator models}
\label{app:oscillatorearmonico}
Consider a collection of  $k$ independent harmonic oscillators characterized by the same frequency $\omega$, representing e.g. $k$ spatial modes of the
electromagnetic field~\cite{SERAFINIBOOK}. 
We can formally describe it via  a complex separable infinite-dimensional Hilbert space with self-adjoint operators $\pos_1,\mom_1,...,\pos_k,\mom_k$ satisfying the canonical commutation relations: $[\pos_j,\pos_k]=0$ $[\mom_j,\mom_k]=0$ $[\pos_j,\mom_k]=\delta_{j,k}$. Its Hamiltonian writes
\begin{equation}
\ham_k = \sum_{j=1}^k \hbar\omega \;  \hat{a}_j^{\dagger}\hat{a}_j,
\end{equation}
where we subtracted the term $\hbar\omega/2$ in order to consider a ground state having zero energy and 
where we introduced the annihilation and creation operators $\hat{a}_j := \sqrt{\frac{\omega}{2\hbar}}\left(\pos_j + \frac{i}{\omega}\mom_j \right)$, $\hat{a}^{\dagger} := \sqrt{\frac{\omega}{2\hbar}}\left(\pos_j - \frac{i}{\omega}\mom_j \right)$. 
Given $\dstate$ a generic state of the model,  we define its characteristic function as $\xi(\dstate;x) := \Tr[\dstate \hat{D}(x)]$, where $\hat{D}(x): = e^{i\hat{r}\cdot x}$ is the displacement operator, $x\in\mathbb{R}^{2k}$ and $\hat{r} := (\pos_1,\mom_1,...,\pos_k,\mom_k)^T$. 
%For any state it also possible to define the statistical mean vector $m(\dstate) := \Tr[\dstate\hat{r}]$ and the covarance matrix with entries given by $\sigma_{jk}(\dstate) := \Tr[\dstate\{\hat{r}_j-m_j(\dstate),\hat{r}_k-m_k(\dstate)\}]$.
%Density matrices whose characteristic function is a Gaussian w.r.t. $x$ are called 
% Gaussian state (by construction they are
%%and it can always be written as 
%%\begin{equation} \label{eq:gauss_dec}
%%\dstate_G = \hat{D}(x)\hat{S}(z)\left( \bigotimes_{j=1}^k\hat{\tau}_{\beta_j}\right)\hat{S}^{\dagger}(z)\hat{D}^{\dagger}(x),
%%\end{equation}
%%where $\hat{S}(z)$ is a squeezing operator, and $\beta_j$ are inverse temperatures \cite{HOLEVOBOOK, SERAFINIBOOK}. For any state it also possible to define the statistical mean vector $m(\dstate) := \Tr[\dstate\hat{r}]$ and the covarance matrix with entries given by $\sigma_{jk}(\dstate) := \Tr[\dstate\{\hat{r}_j-m_j(\dstate),\hat{r}_k-m_k(\dstate)\}]$. 
%%Any Gaussian state
%  uniquely identified by the mean vector and  covariance matrix). 
%   moreover the set of coherent states $\ket{\varphi}\!\!\bra{\varphi}$ are all Gaussian state where $\sigma(\ket{\varphi}\!\!\bra{\varphi})=I_{2k}$~\cite{HOLEVOBOOK, SERAFINIBOOK}.
An important  class of  noise models for these systems  is provided by the Bosonic Gaussian Channels (BGCs)~\cite{HOLEVOBOOK, SERAFINIBOOK} which  characterize dissipation, thermalization, amplification, and squeezing effects. 
Using the characteristic function formalism, the  action of a $k$-mode BGC channel $\Phi_k$ is  assigned via the mapping 
 \begin{eqnarray} \xi(\Phi_k(\dstate);x) = \xi(\dstate; X^Tx)e^{-\frac{1}{4}x^TYx+iv\cdot x}\;,\end{eqnarray} 
 where $v$ is an assigned vector of  $\mathbb{R}^{2k}$ and  where 
 $X$ and $Y$ 
are two  real $2k\times 2k$  matrices 
satisfying the conditions 
\begin{eqnarray} Y \geq i(\theta_k - X\theta_k X^T)\;, \end{eqnarray}  with $\theta_k := \begin{pmatrix} 0 && I_k \\ -I_k && 0 \end{pmatrix}$ is the symplectic form of the model (in this expression $I_k$ stands for $k\times k$ identity matrix). 
%
%the mapping %  induce mapping of the form 
% %by their action on the characteristic function of quantum state, in particular for any BGC $\Phi$ it holds that 
%
%%are real $2k\times 2k$ matrix and $Y$ must satisfy the condition $Y \geq i(\theta_k - X\theta_k X^T)$ here $\theta_k = \begin{pmatrix} 0 && I_k \\ -I_k && 0 \end{pmatrix}$. 
% It is straightforward that the BGCs send Gaussian states into Gaussian states, and their action on the first two moments of any wquantum state is
%\begin{equation}
%m(\Phi(\dstate)) = Xm(\dstate) + v \qquad \sigma(\Phi(\dstate)) = X\sigma(\dstate)X^T + Y.
%\end{equation}
A special  subset of BGCs is represented by the so called phase-insensitive bosonic Gaussian Channels (PI-BGCs), which satisfies the following additional condition
\begin{equation}
\Phi_k(e^{-i\ham_k t}\dstate e^{i\ham_kt}) = e^{\mp i\ham_kt}\Phi_k({\dstate})e^{\pm i\ham_kt},
\end{equation}
for any quantum state $\dstate$ and $t\in\mathbb{R}$.  Single mode ($k=1$)
PI-BGCs include 
\begin{itemize}
\item thermal attenuators $\mathcal{L}_{\lambda,N}$ ($v=0$, $X= \sqrt{\lambda} I_2$, $Y= (1-\lambda)(2N+1)I_2$, with $\lambda \in [0,1)$ and $N\geq 0$), 
\item 
thermal amplifiers $\mathcal{A}_{\mu,N}$ ($v=0$, $X = \sqrt{\mu}I_2$, $Y = (\mu -1)(2N+1)I_2$,  with $\mu > 1$ and $N\geq 0$), 
\item additive noise channels 
 $\mathcal{N}_N$ ($v=0$,$X= I_2$ and $Y = 2NI_2$, with $N\geq 0$).
 \end{itemize} 
The maximum values for the 
ergotropy and total-ergrotopy functionals at the output of PI-BGCs were studied in Ref.~\cite{PhysRevLett.127.210601} showing 
 that optimal inputs are provided by the multi-mode coherent states 
  \begin{eqnarray}
 \ket{\vec{\varphi}}&:=&|\varphi_{1}\rangle\otimes \cdots \otimes |\varphi_k\rangle\;, 
 \end{eqnarray} 
where for $j=1,\cdots, k$,  $|\varphi_j\rangle$ is a coherent state of the $j$-th mode with complex amplitude $\varphi_j$.
 Specifically, given 
 $\Phi_k$ a generic PI-BGC it follows that, 
\begin{eqnarray}\label{resulta} 
\ergo\left(\Phi_k(\ket{\vec{\varphi}}\!\!\bra{\vec{\varphi}});\ham_k\right)
&\geq&
\ergo\left(\Phi_k(\dstate);\ham_k \right) \; ,\\
\ergo_{\rm tot} \left(\Phi_k(\ket{\vec{\varphi}}\!\!\bra{\vec{\varphi}});\ham_k\right)
&\geq&
\ergo_{\rm tot} \left(\Phi_k(\dstate);\ham_k \right) \; , \nonumber 
\end{eqnarray}
for all 
 states $\dstate$ with 
 \begin{eqnarray} \en(\dstate; \ham_k) \leq  \en(\ket{\vec{\varphi}}\!\!\bra{\vec{\varphi}};\ham_k) =
 \hbar \omega |\vec{\varphi}|^2 =  \hbar \omega \sum_{j=1}^k |\varphi_j|^2\;.
 \end{eqnarray} 

\subsection{Capacitance gap for $k=2$-mode BGCs}
%Consider a two-mode QB model where the first mode ($A$) evolves under the action of 
%thermal attenuator channel $\mathcal{L}_{\lambda,N}$, while the second (mode $B$) under the action of map 
%$\mathcal{L}_{0,0}$. The resulting noise is hence 
Here we evaluate the ergotropic capacitances of a PI-BGC  
$\Phi_2 = \mathcal{L}_{\lambda,N}\otimes\mathcal{L}_{0,0}$ acting on two independent, resonant modes 
(mode $A$ and mode $A'$  both having frequency $\omega$). 
From~(\ref{resulta}) it follows that computing $\ergo^{(n)}(\Phi_2;\mathfrak{e})$ and 
$\ergo_{\rm tot}^{(n)}(\Phi_2;\mathfrak{e})$ with $n$ arbritrary integer we can restrict the analysis of the
ergotropic capacitances of $2n$-mode coherent input states 
\begin{eqnarray}
|\vec{\varphi}\rangle:=|\varphi_1\rangle\otimes |\varphi_1' \rangle\otimes \cdots
\otimes |\varphi_n\rangle\otimes |\varphi_n' \rangle\;, 
\end{eqnarray}
 fulfilling the energy constraint
\begin{eqnarray}  \en(\ket{\vec{\varphi}}\!\!\bra{\vec{\varphi}};\ham_2^{(n)}) &=& \hbar \omega |\vec{\varphi}|^2\nonumber \\
&=& \hbar \omega\sum_{j=1}^n \left( |{\varphi}_j|^2+|{\varphi}_j'|^2\right)\leq n \mathfrak{e}  \;,\label{eed} 
\end{eqnarray} 
(in the above expressions $|\varphi_j\rangle$ and $|\varphi_j'\rangle$ represent
coherent states of $j$-copies of the modes $A$ and $A'$ respectively, while 
 $\hat{H}_2$ is their Hamiltonian).
Observe that the single-mode channel $\mathcal{L}_{0,0}$ maps all input states of the mode $A'$ into the ground state configuration $|0\rangle$, while
$\mathcal{L}_{\lambda,N}$ maps coherent states of mode $A$ into thermal displaced density matrices charaterized by inverse 
temperature 
\begin{eqnarray}\label{defbeta} 
\beta_{\lambda,N}:= \log\frac{(1-\lambda)N+1}{(1-\lambda)N}\;. 
\end{eqnarray} 
Specifically indicating with $\hat{D}_1$ the displacement operator that applied to the ground of mode $A$ 
generates the coherent state 
 $|\sqrt{\lambda} \varphi_1\rangle$, 
we can write 
\begin{equation}
\Phi_2(\ket{{\varphi}_1}\!\!\bra{{\varphi}_1}\otimes \ket{{\varphi}'_1}\!\!\bra{{\varphi}'_1}) = \hat{D}_1 \left( \hat{\tau}_{\beta_{\lambda, N}}\otimes \ket{0}\!\!\bra{0} \right)
\hat{D}_1^{\dagger}\;, 
\label{output_coerenti}
\end{equation}
where $\hat{\tau}_{\beta_{\lambda, N}}=\frac{e^{-\beta_{\lambda, N} \hat{h}}}{Z_1\left(\beta_{\lambda, N}\right) }$, 
with 
$\hat{h} := \hbar \omega \hat{a}^\dag \hat{a}$ the single mode Hamiltonian, and 
 $Z_1\left(\beta\right): = \mbox{Tr}[ e^{-\beta \hat{h}}]$ the associated single mode partition function. 
 Generalizing to the case of $n$ copies this leads to 
 \begin{eqnarray}
\Phi^{\otimes n}_2(\ket{\vec{\varphi}}\!\!\bra{\vec{\varphi}}) &=& 
\bigotimes_{j=1}^n  \Phi_2(\ket{{\varphi}_j}\!\!\bra{{\varphi}_j}\otimes \ket{{\varphi}'_j}\!\!\bra{{\varphi}'_j}) 
\nonumber \\
&=& \hat{D}_A \left( \hat{\tau}_{\beta_{\lambda, N}}\otimes \ket{0}\!\!\bra{0} \right)^{\otimes n} 
\hat{D}^\dag_A
\label{output_coerentin}
\end{eqnarray}
with $\hat{D}_A = \hat{D}_1\otimes \hat{D}_2\otimes \cdots\otimes \hat{D}_n$. 
 From 
from Eq.~(\ref{ergoeee}) and (\ref{GIBBS}) we get 
\begin{eqnarray} \label{GIBBScoer} 
&&\ergo(\Phi^{\otimes n}_2(\ket{\vec{\varphi}}\!\!\bra{\vec{\varphi}});\ham^{(n)}_2) \\ 
&&\qquad = \nonumber 
 \en(\Phi^{\otimes n}_2(\ket{\vec{\varphi}}\!\!\bra{\vec{\varphi}});\ham^{(n)}_2)-  \en(\dstate^{(n)}_{\text{pass}};\ham^{(n)}_2)\;,  \end{eqnarray}
and 
\begin{eqnarray} 
&&\ergo_{\rm tot}(\Phi^{\otimes n}_2(\ket{\vec{\varphi}}\!\!\bra{\vec{\varphi}});\ham^{(n)}_2) \\ 
&&\qquad = \nonumber 
 \en(\Phi^{\otimes n}_2(\ket{\vec{\varphi}}\!\!\bra{\vec{\varphi}});\ham^{(n)}_2)-  \en(\dstate^{(n)}_{\text{c-pass}};\ham^{(n)}_2)\;,  
% 
%\ergo_{\rm tot}(\Phi_2(\ket{\vec{\varphi}}\!\!\bra{\vec{\varphi}});\ham) =
% \en(\Phi_2(\ket{\vec{\varphi}}\!\!\bra{\vec{\varphi}});\ham)-  \en(\hat{\tau}_{\beta_\star};\ham) , 
\end{eqnarray} 
with the  output mean energy that can be easily computed as
\begin{equation} \en(\Phi^{\otimes n}_2(\ket{\vec{\varphi}}\!\!\bra{\vec{\varphi}});\ham^{(n)}_2)
= \hbar \omega \left(   \sum_{j=1}^n\lambda |{\varphi}_j|^2 + (1-\lambda) N n\right) \;. 
 \end{equation} 
In the above expressions $\dstate^{(n)}_{\text{pass}}$ and $\dstate^{(n)}_{\text{c-pass}}$
are respectively
the passive and completely passive counterparts of $\Phi^{\otimes n}_2(\ket{\vec{\varphi}}\!\!\bra{\vec{\varphi}})$.
 Exploiting the
unitary invariance discussed at the end of first section of the Supp. Mat. and the identity~(\ref{output_coerentin}) it also
follows that $\dstate^{(n)}_{\text{pass}}$ and $\dstate^{(n)}_{\text{c-pass}}$
correspond to the passive and completely passive counterparts of 
the $2n$-mode product state
\begin{eqnarray} 
\dstate^{(n)}_{\lambda, N} : = \left( \hat{\tau}_{\beta_{\lambda, N}}\otimes \ket{0}\!\!\bra{0} \right)^{\otimes n} \;,
\end{eqnarray} 
which does not bear any dependence upon the parameters $\varphi_{j}$'s and 
$\varphi'_{j}$'s and thus on the  energy at the input of the channel. 
Accordingly we can conclude that neither
$\en(\dstate^{(n)}_{\text{pass}};\ham^{(n)}_2)$ nor $\en(\dstate^{(n)}_{\text{c-pass}};\ham^{(n)}_2)$ are functions of~$\mathfrak{e}$
allowing us to solve explicitly 
the maximization required to compute  $\ergo^{(n)}(\Phi_2;\mathfrak{e})$ and 
$\ergo_{\rm tot}^{(n)}(\Phi_2;\mathfrak{e})$: 
\begin{eqnarray} \label{ergon}
\frac{\ergo^{(n)}(\Phi_2;n\mathfrak{e})}{n\hbar \omega} &=&\frac{\lambda\mathfrak{e}}{\hbar\omega}+ (1-\lambda) N -
\frac{ \en(\dstate^{(n)}_{\text{pass}};\ham^{(n)}_2)}{n\hbar\omega},\\ \label{ergon1} 
\frac{\ergo_{\rm tot}^{(n)}(\Phi_2;n\mathfrak{e})}{n\hbar \omega}&=&
   \frac{\lambda\mathfrak{e}}{\hbar\omega}+ (1-\lambda) N -\frac{ \en(\dstate^{(n)}_{\text{c-pass}};\ham^{(n)}_2)}{n\hbar\omega} \; . \nonumber \\   
\end{eqnarray} 
Following the prescription of  Theorem~\ref{th1} to compute the value of  
$C_{\rm loc} \left( \Phi_2; \mathfrak{e} \right)$ (and $C_{\rm loc, sep} \left( \Phi_2; \mathfrak{e} \right)$) we need just to consider the convex clousure of function~(\ref{ergon}) evaluated for
 $n=1$, i.e.
 \begin{eqnarray} \label{exactnewnew} 
\frac{C_{\rm loc} \left( \Phi_2; \mathfrak{e} \right)}{\hbar \omega} &=&\frac{\lambda\mathfrak{e}}{\hbar\omega}+ (1-\lambda) N -
\frac{\en(\dstate^{(1)}_{\text{pass}};\ham_2)}{\hbar \omega}\;. 
\end{eqnarray} 
where now  $\dstate^{(1)}_{\text{pass}}$ the passive counterpart of the two-modes state
\begin{eqnarray} \dstate^{(1)}_{\lambda, N} := \hat{\tau}_{\beta_{\lambda, N}}\otimes \ket{0}\!\!\bra{0}\;.\label{target11} 
\end{eqnarray} 

Regarding $C_{\rm sep} \left( \Phi_2; \mathfrak{e} \right)$ we need instead to take the limit over $n\rightarrow \infty$ of
(\ref{ergon}), or equivalently, of (\ref{ergon1})~\cite{quantumworkcapacitances}.
The calculation simplifies by observing  that  the completely passive counterpart of the tensor product state
$\dstate^{(n)}_{\lambda, N}$ corresponds to 
\begin{eqnarray}
\dstate^{(n)}_{\text{c-pass}}= \dstate_{\text{c-pass}}^{\otimes n}\;, 
\end{eqnarray} 
where $\dstate_{\text{c-pass}}$ is the completely passive of $\dstate^{(1)}_{\lambda, N}$. Thus we can write 
 \begin{eqnarray} \label{exact1} 
\frac{C_{\rm sep} \left( \Phi_2; \mathfrak{e} \right)}{\hbar \omega} &=&\frac{\lambda\mathfrak{e}}{\hbar\omega}+ (1-\lambda) N -
\frac{\en(\dstate_{\text{c-pass}};\ham_2)}{\hbar \omega}\;,
\end{eqnarray} 
(notice that, thanks to (\ref{resulta}) 
the above expression also provides the exact value of $C_{\cal E} \left( \Phi_2; \mathfrak{e} \right)$).
From Eqs.~(\ref{exact}) and (\ref{exact1}) we finally obtain \begin{eqnarray} \nonumber 
\Delta C(\mathfrak{e})  &=&C_{\rm sep} \left( \Phi_2; \mathfrak{e} \right)-C_{\rm loc} \left( \Phi_2; \mathfrak{e} \right)\\
&=&\en(\dstate_{\text{pass}};\ham_2)-\en(\dstate_{\text{c-pass}};\ham_2)\;,\label{ddfddg} 
\end{eqnarray} 
which does not depend upon the input energy $\mathfrak{e}$. 
To show that the  gap~(\ref{ddfddg}) is non zero, it is sufficient to verify that 
\begin{eqnarray} \dstate^{(1)}_{\text{pass}}\neq \dstate^{(1)}_{\text{c-pass}}\label{proof1} \;, \end{eqnarray} i.e.
that the passive counterpart of (\ref{target11}) is not completely passive. 
To see this, let us invoke Eq.~(\ref{GIBBS}) to express
\begin{eqnarray} \label{cpassthermal} 
\dstate^{(1)}_{\text{c-pass}} = \frac{e^{ - \beta_{\star} \ham_2}}{\mbox{Tr}[e^{ - \beta_{\star} \ham_2}]}=\left( 
\frac{e^{ - \beta_{\star} \hat{h}}}{Z_1({\beta_{\star}})}\right)^{\otimes 2} = 
\hat{\tau}_{\beta_{\star}}^{\otimes 2}\;,
\end{eqnarray} 
where $\beta_{\star}$ is chosen so that 
\begin{eqnarray}S(\hat{\tau}_{\beta_{\star}}^{\otimes 2}) = 2 S(\hat{\tau}_{\beta_{\star}})&=& 
 S(\dstate^{(1)}_{\lambda, N}) = 
S(\hat{\tau}_{\beta_{\lambda, N}})  \nonumber \\
  &\Longrightarrow& \beta_{\star}= {\cal S}^{-1}\left({\cal S}(\beta_{\lambda, N}) /2\right)\;, 
 \label{ddfd222} 
\end{eqnarray} 
with ${\cal S}(x)$ the function \begin{equation}
S(\hat{\tau}_{\beta}) = {\cal S}(\beta) := -\ln\left( 1- e^{-\beta \hbar \omega} \right)
+\beta\hbar \omega \left(\frac{  e^{-\beta\hbar \omega}}{1- e^{-\beta\hbar \omega}}\right)\;, 
\end{equation}
which expresses  the von Neumann entropy of the single-mode thermal states $\hat{\tau}_{\beta}$.
Equation~(\ref{cpassthermal}) implies that all the eigenvalues of $\dstate_{\text{c-pass}}$
have explicit degeneracies, 
\begin{eqnarray} \begin{array}{lcc} 
\lambda_0(\dstate^{(1)}_{\text{c-pass}}) = \tfrac{1}{Z^2_1({\beta_{\star}})}\;,&\quad & \mbox{deg $=1$,} \\ \\
\lambda_1(\dstate^{(1)}_{\text{c-pass}}) =\tfrac{e^{-\beta_{\star} \hbar \omega}}{Z^2_1({\beta_{\star}})}
\;, & \quad & \mbox{deg $=2$,} \\
\\
\lambda_2(\dstate^{(1)}_{\text{c-pass}}) =\tfrac{e^{-2\beta_{\star} \hbar \omega}}{Z^2_1({\beta_{\star}})}
\;, & \quad & \mbox{deg $=3$,} \\ \\
\cdots & \quad & \cdots \\ \\ 
\lambda_k(\dstate^{(1)}_{\text{c-pass}}) =\tfrac{e^{-k\beta_{\star} \hbar \omega}}{Z^2_1({\beta_{\star}})}
\;, & \quad & \mbox{deg $=k+1$,} 
\end{array} \label{signature} 
\end{eqnarray} 
If by contradiction $\en(\dstate^{(1)}_{\text{c-pass}};\ham_2)$ would coincide with $\en(\dstate^{(1)}_{\text{pass}};\ham_2)$, then for any energy eigenvalue $k$ the sum of its $k+1$ eigenvalues of $\dstate^{(1)}_{\text{pass}}$ must be equal to $(k+1)e^{-\beta_{\star}\hbar\omega}/Z_1^2(\beta_{\star})$ due to the degeneracy of the eigenspace of $\ham_2$. This fact sets an hard constraint on the positive eingenvalues of $\dstate^{(1)}_{\text{pass}}$: 
\begin{eqnarray}\label{spectre} 
\lambda_k(\dstate^{(1)}_{\text{pass}}) = \frac{e^{-k\beta_{\lambda, N} \hbar \omega}}{Z_1({\beta_{\lambda, N}})}\;,
\end{eqnarray}
in particular their ratios must satisfy the following conditions
\begin{equation}
e^{\hbar\omega\beta_{\star}} = \frac{2\lambda_0(\dstate^{(1)}_{\text{pass}})}{\lambda_1(\dstate^{(1)}_{\text{pass}}) + \lambda_2(\dstate^{(1)}_{\rm pass})} = \frac{2}{e^{-\hbar\omega\beta_{\lambda,N}} + e^{-2\hbar\omega\beta_{\lambda,N}}} \; ,
\end{equation}
and 
\begin{eqnarray}
e^{\hbar\omega\beta_{\star}} &=& \frac{3}{2}\frac{\lambda_1(\dstate^{(1)}_{\rm pass}) + \lambda_2(\dstate^{(1)}_{\rm pass})}{\lambda_3(\dstate^{(1)}_{\rm pass}) + \lambda_4(\dstate^{(1)}_{\rm pass}) + \lambda _5(\dstate^{(1)}_{\rm pass})} \nonumber \\
&=& \frac{3}{2}\frac{e^{-\hbar\omega\beta_{\lambda,N}}+e^{-2\hbar\omega\beta_{\lambda,N}}}{e^{-3\hbar\omega\beta}+e^{-4\hbar\omega\beta}+e^{-5\hbar\omega\beta}} \; ,
\end{eqnarray}
so that would imply
\begin{equation}
 \tfrac{2}{e^{-\hbar\omega\beta_{\lambda,N}} + e^{-2\hbar\omega\beta_{\lambda,N}}} = \tfrac{3}{2}\tfrac{e^{-\hbar\omega\beta_{\lambda,N}}+e^{-2\hbar\omega\beta_{\lambda,N}}}{e^{-3\hbar\omega\beta_{\lambda,N}}+e^{-4\hbar\omega\beta_{\lambda,N}}+e^{-5\hbar\omega\beta_{\lambda,N}}} \; ,
\end{equation}
but for sufficiently large $\beta_{\lambda,N}$ this last condition cannot be satisfied. \\
We now explicitly compute the values of $\en(\dstate^{(1)}_{\text{pass}};\ham_2)$ and $\en(\dstate^{(1)}_{\text{c-pass}};\ham_2)$.
On one hand, from~(\ref{cpassthermal}) we know that the latter is nothing but the mean energy of the two-mode thermal state $\hat{\tau}_{\beta_{\star}}^{\otimes 2}$, i.e. 
\begin{equation} \label{onone} 
\en(\dstate^{(1)}_{\text{c-pass}};\ham_2) = 2 \en(\hat{\tau}_{\beta_{\star}};\hat{h} ) = 
 2 \hbar \omega \left(\frac{e^{-\beta_{\star}\hbar \omega}}{1- e^{-\beta_{\star}\hbar \omega}}\right) \;,
\end{equation} 
which an implicit function of $\beta_{\lambda, N}$ via~(\ref{ddfd222}). 
On the other hand, from~(\ref{spectre}) we get 
\begin{eqnarray} \label{onother} 
&&\en(\dstate^{(1)}_{\text{pass}};\ham_2) 
=   \hbar \omega \sum_{k=0}^{\infty} k \sum_{k'\in I_k} \frac{e^{-k\beta_{\lambda, N} \hbar \omega}}{Z_1({\beta_{\lambda, N}})}\\
&&=  \hbar \omega \sum_{k=0}^{\infty}k \left[e^{-\frac{k(k+1)}{2}\hbar\omega\beta_{\lambda, N}} - e^{-(\frac{k(k+3)}{2}+1)\hbar\omega\beta_{\lambda, N}} \right]
\;, \nonumber 
\end{eqnarray} 
where  $I_k$ represents the collection of $k+1$ integers defined by the following sequence:
\begin{eqnarray} I_0 &=&\{ 0\} \;, \qquad  I_1 =\{ 1,2\}\;, \qquad I_2 =\{ 3,4,5\}\;,  \nonumber \\
 I_3 &=&\{ 6,7,8,9\}\;, \qquad \cdots
\end{eqnarray} 
Panel b) of Fig. \ref{fig:MADworld} was realized by numerically evaluating~(\ref{onone}) and (\ref{onother}).

\subsection{Multilevel Amplitude Damping channels}

Multilevel Amplitude Damping (MAD) channels~\cite{MAD}, describe relaxation processes of excited to less energetic states in qudit systems. 
%This, in the Stinespring representation of a quantum channel \cite{STINE}, is equivalent to an environment E in a pure state absorbing the excitations leaving the considered system S, where to each decay process in S is associated a different transition in E. See Fig. \ref{fig:MAD_stine}.
%Given a set of possible damping parameters $\vec{\gamma} \equiv \{ \gamma_{ji} \}_{0\leq i<j \leq d-1}$, s.t. $0\leq\gamma_{ji}\leq 1 \quad \forall i,j$ and s.t. each $\gamma_{ji}$ describes the transition from level $j$ to level $i$, we can express the Kraus set \cite{kraus1983states, CHOI1975285} $\{ K_0, K_{ji}\}_{0\leq i<j \leq d-1}$ for the MAD channels $\Phi_{\Vec{\gamma}}$ as:
%\begin{align}
%    K_0 = \sum_{j=0}^{d-1} \sqrt{1-\sum_{i<j} \gamma_{ji}} \ket{j}\!\!\bra{j}\; , \qquad
%    K_{ji} = \sqrt{\gamma_{ji}} \ket{j}\!\!\bra{i} \; .
%\end{align}
In the case of a qutrit, the possible decay processes are associated with three damping parameters $\{ \gamma_1, \gamma_2,  \gamma_3 \}$ that define  $\Phi_{\gamma_1, \gamma_2, \gamma_3}$  via the following transformation \begin{widetext}
\begin{equation}\label{eq: channel actionMAD}
\hspace*{-0.3cm}
\Phi_{\gamma_1, \gamma_2, \gamma_3}(\hat{\rho})=\left(
\begin{array}{ccc}
\rho_{00} + \gamma_{1} \rho_{11}+\gamma_{3} \rho_{22} &  \sqrt{1-\gamma_{1}} \rho_{01}  &
    \sqrt{1-\gamma_{2}-\gamma_{3}} \rho_{02} \\
\sqrt{1-\gamma_{1}} \rho_{01}^*  & (1-\gamma_{1}) \rho_{11}+\gamma_{2} \rho_{22} &
    \sqrt{(1-\gamma_{1}) (1-\gamma_{2}-\gamma_{3})} \rho_{12} \\
 \sqrt{1-\gamma_{2}-\gamma_{3}} \rho_{02}^* &  \sqrt{(1-\gamma_{1}) (1-\gamma_{2}-\gamma_{3})}\rho_{12}^* &  (1-\gamma_{2}-\gamma_{3}) \rho_{22} \\
\end{array}
\right) \; .
\end{equation}
Resonant Multilevel Amplitude Damping (ReMAD) channels~\cite{ReMAD}, describe relaxation processes of excited to less energetic states in qudit systems with resonant energy levels.  
In the case of a qutrit, the associated map can be expressed as
\begin{equation}\label{eq: channel actionReMAD}
\hspace*{-0.3cm}
\Gamma_{\gamma_1, \gamma_2, \gamma_3}(\hat{\rho})=\left(
\begin{array}{ccc}
\rho_{00} + \gamma_{1} \rho_{11}+\gamma_{3} \rho_{22} &  \sqrt{1-\gamma_{1}} \rho_{01} + \sqrt{\gamma_1\gamma_2}\rho_{12}  &
    \sqrt{1-\gamma_{2}-\gamma_{3}} \rho_{02} \\
\sqrt{1-\gamma_{1}}\rho_{01}^* + \sqrt{\gamma_1\gamma_2}\rho_{12}^*   & (1-\gamma_{1}) \rho_{11}+\gamma_{2} \rho_{22} &
    \sqrt{(1-\gamma_{1}) (1-\gamma_{2}-\gamma_{3})} \rho_{12} \\
 \sqrt{1-\gamma_{2}-\gamma_{3}} \rho_{02}^* &  \sqrt{(1-\gamma_{1}) (1-\gamma_{2}-\gamma_{3})}\rho_{12}^* &  (1-\gamma_{2}-\gamma_{3}) \rho_{22} \\
\end{array}
\right) \; .
\end{equation}
Notice how in the element 01  and 10 of the output density matrix we have now a mixing of the coherences $\rho_{01}$ and $\rho_{12}$, which is absent in MAD channels.
\end{widetext}

\subsection{Capacitance gap for MAD and ReMAD channels}
In panel a) of Fig. \ref{fig:MADworld} we plot a lower bound on the capacitance gap $\Delta C$ in function of the input energy $\mathfrak{e}$ for an instance of a MAD channel $\Phi_{\gamma_1,\gamma_2,\gamma_3}$ and of a ReMAD channel $\Gamma_{\gamma_1,\gamma_2,\gamma_3}$. The lower bound is obtained by numerically calculating the difference between $\ergo^{(1)}_{\rm tot}(\Phi_{\gamma_1,\gamma_2,\gamma_3};\mathfrak{e})$ and $\ergo^{(1)}(\Phi_{\gamma_1,\gamma_2,\gamma_3};\mathfrak{e})$ (respectively for the ReMAD channel the quantity $\ergo^{(1)}_{\rm tot}(\Gamma_{\gamma_1,\gamma_2,\gamma_3}) - \ergo^{(1)}(\Gamma_{\gamma_1,\gamma_2,\gamma_3})$ has been computed). In order to prove that this difference represents a lower bound on the capacitance gap we show in Fig. \ref{fig:MADconcav} that the functions $\ergo^{(1)}_{\rm tot}(\Phi_{\gamma_1,\gamma_2,\gamma_3};\mathfrak{e})$, $\ergo^{(1)}(\Phi_{\gamma_1,\gamma_2,\gamma_3};\mathfrak{e})$, $\ergo^{(1)}_{\rm tot}(\Gamma_{\gamma_1,\gamma_2,\gamma_3});\mathfrak{e}$ and $\ergo^{(1)}(\Gamma_{\gamma_1,\gamma_2,\gamma_3};\mathfrak{e})$ are concave functions of the energy $\mathfrak{e}$ for the values of $\gamma_1$, $\gamma_2$ and $\gamma_3$ reported in the main text, this property implies that 
\begin{eqnarray}
&& \ergo^{(1)}(\Phi_{\gamma_1,\gamma_2,\gamma_3};\mathfrak{e}) = \chi(\Phi_{\gamma_1,\gamma_2,\gamma_3};\mathfrak{e}) \; , \nonumber \\
&& \ergo^{(1)}(\Gamma_{\gamma_1,\gamma_2,\gamma_3};\mathfrak{e}) = \chi(\Gamma_{\gamma_1,\gamma_2,\gamma_3};\mathfrak{e}) \; ,
\end{eqnarray}
and 
\begin{eqnarray}
&& \ergo^{(1)}_{\rm tot}(\Phi_{\gamma_1,\gamma_2,\gamma_3};\mathfrak{e}) = \chi_{\rm tot}(\Phi_{\gamma_1,\gamma_2,\gamma_3};\mathfrak{e}) \; , \nonumber \\
&& \ergo^{(1)}_{\rm tot}(\Gamma_{\gamma_1,\gamma_2,\gamma_3};\mathfrak{e}) = \chi_{\rm tot}(\Gamma_{\gamma_1,\gamma_2,\gamma_3};\mathfrak{e}) \; .
\end{eqnarray}
\begin{figure} 
    \centering
    \includegraphics[width=\linewidth]{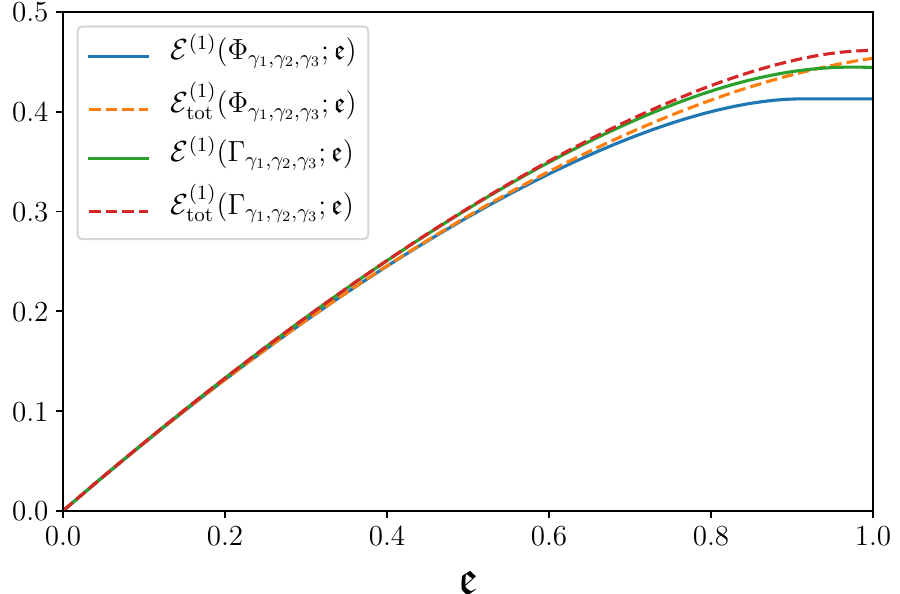}
    \caption{In figure we plot the single-shot optimal output ergotropies $\ergo^{(1)}(\Phi_{\gamma_1,\gamma_2,\gamma_3};\mathfrak{e})$ and $\ergo^{(1)}(\Gamma_{\gamma_1,\gamma_2,\gamma_3};\mathfrak{e})$, and the single-shot output total ergotropies $\ergo^{(1)}_{\rm tot}(\Phi_{\gamma_1,\gamma_2,\gamma_3};\mathfrak{e})$ and $\ergo^{(1)}(\Gamma_{\gamma_1,\gamma_2,\gamma_3};\mathfrak{e})$. We notice that they are all concave functions of the average input energy $\mathfrak{e}$. Here the values of the parameters are $\gamma_1=0.3$, $\gamma_2=0.2$ and $\gamma_3=0.6$.}
    \label{fig:MADconcav} 
\end{figure}
Since for any channel $\Lambda$ and input average energy, $\mathfrak{e}$ as stated in the main text, $\chi(\Lambda;\mathfrak{e})=C_{\rm loc}(\Lambda;\mathfrak{e})$ and $\chi_{\rm tot}(\Lambda;\mathfrak{e}) \leq C_{\rm sep}(\Lambda;\mathfrak{e})$, it is straightforward that for the values of the parameters considered
\begin{eqnarray}
&&\Delta C(\Phi_{\gamma_1,\gamma_2,\gamma_3};\mathfrak{e}) \geq \ergo^{(1)}_{\rm tot}(\Phi_{\gamma_1,\gamma_2,\gamma_3};\mathfrak{e}) - \ergo^{(1)}(\Phi_{\gamma_1,\gamma_2,\gamma_3};\mathfrak{e}) \; , \nonumber \\
&& \Delta C(\Gamma_{\gamma_1,\gamma_2,\gamma_3};\mathfrak{e}) \geq \ergo^{(1)}_{\rm tot}(\Gamma_{\gamma_1,\gamma_2,\gamma_3};\mathfrak{e}) - \ergo^{(1)}(\Gamma_{\gamma_1,\gamma_2,\gamma_3};\mathfrak{e}) \; . \nonumber\\
\end{eqnarray}

\end{document}